\documentclass[article,reqno]{amsart}
\usepackage{amsmath,amssymb}
\usepackage{latexsym}
\usepackage{epsfig}
\usepackage{cite}
\usepackage{booktabs}
\usepackage{color}
\usepackage{graphicx}
\usepackage{tikz}
\usepackage{pgfplots}
\usepackage{amsthm}
\usepackage{braket}
\setlength{\textheight}{27pc} \oddsidemargin-0.5truecm
\evensidemargin-0.5truecm \textwidth17truecm \textheight22truecm
\topmargin-.1truecm

\numberwithin{equation}{section}

\newtheorem{lemma}{Lemma}[section]

\newcommand{\beq}{\begin{equation}}
\newcommand{\eeq}{\end{equation}}

\title{Quantization and soliton-like solutions for the $\Phi\Psi$-model in a optic fiber} 
\author{F. Belgiorno$^{1,2,4}$, S.L. Cacciatori$^{3,4}$, S. Trevisan$^{3,4}$ \and A. Vigan\`o$^{4,5}$}
\address{\noindent $^1$Dipartimento di Matematica, Politecnico di Milano, Piazza Leonardo 32, IT-20133 Milano, Italy\endgraf
$^2$INdAM-GNFM \endgraf
$^3$Department of Science and High Technology, Universit\`a dell'Insubria, Via Valleggio 11, IT-22100 Como, Italy\endgraf
$^4$INFN sezione di Milano, via Celoria 16, IT-20133 Milano, Italy\endgraf
$^5$Dipartimento di Fisica, Universit\`a degli Studi di Milano, Via Celoria 16, IT-20133 Milano, Italy}

%%%%%%%%%%%%%%%%%%%%%%
\begin{document}
\maketitle
\begin{abstract}
 In the framework of a mesoscopical model for dielectric media  we provide an analytical description for the electromagnetic field confined in a cylindrical cavity containing a finite dielectric sample. 
This system is apted to simulate the electromagnetic field in a optic fiber, in which two different regions, a vacuum region and a dielectric one, appear. A complete description for the scattering basis is introduced, together with field quantization and 
the two-point function. Furthermore, we also determine soliton-like solutions in the dielectric, 
 propagating in the sample of nonlinear dielectric medium.  
\end{abstract}

\section{Introduction}

Dielectric media in the framework of analogue gravity are an active subject of investigation, with particular 
reference to the Hawking effect in nonlinear dielectrics. See e.g. 
\cite{philbin,cacciatori,belgiorno-prd,rubino2011,petev,finazzi2012,finazzi2013,hopfield-hawking,jacquet,linder,
hopfield-kerr,haw-book,master}. As to experiments  
with dielectric media and their debate one may refer to \cite{faccio,rubino2011,unshu,rebut,finazzi2013,petev} and also to 
the (uncontroversial) experiment in a optic fiber reported in \cite{drori}. 
In general, the problem is quite difficult, because of dispersive effects 
associated with condensed matter systems. Notwithstanding, a framework can be provided where, 
in the limit of weak dispersive effects, in a precise mathematical sense, one is able to find 
how the Hawking effect manifests itself when the system is affected by the presence of 
horizon(s) (mathematically, turning point(s)) \cite{master}. Mostly, calculations are carried out for a dielectric 
medium filling all the space. Furthermore, in order to avoid technical difficulties arising mainly 
because of the gauge field nature of the electromagnetic field, which arise naturally in the 
Hopfield model (see \cite{hopfield-hawking,hopfield-scripta,exact}),   we have introduced a simplified model, 
called the $\Phi\Psi$-model, where the original fields of the Hopfield model are replaced by two scalar 
fields: $\Phi$ in place of the electromagnetic field, and $\Psi$ in place of the polarization \cite{hopfield-hawking}.  
An exact quantization for the  fully relativistically covariant version of the model have been provided in \cite{phi-psi}. 
We have also taken into account the case of dielectric medium 
filling only an half-space \cite{vigano}. We have verified that, in the latter case, spectral boundary conditions 
are required, because of the peculiar role played by the polarization field.

Herein, we extend our analysis by taking into account a cylindrical geometry, where the dielectric field fills only 
a finite cylindrical region of length $2L$ and radius $R$. The remaining region of radius $R$ is 
filled by vacuum. This simplified setting can be still interesting because the vacuum regions 
could be also replaced by regions containing dielectrics with different refractive index.

In the first part of our analysis, we discuss in details the problem of the boundary conditions to 
be imposed on the fields, and a complete scattering basis for the problem is introduced through separation 
of variables allowed by our peculiar geometrical setting. We also provide the full propagator for the 
model at hand.

In the second part, we take into account a fully nonlinear dielectric, where the nonlinearity 
is simulated by introducing a term proportional to the fourth power of the polarization field $\Psi$. 
Our aim is to show that solitonic solutions exist, representing a dielectric perturbation travelling 
with constant velocity in the direction of the cylindrical fiber axis. We can show that, 
by taking into account `homogeneous' solutions which do not vary in the radial direction and also in the azimuthal 
one, soliton-like solutions exist, with different characteristics depending on suitable parameters. 
A linearization around the solitonic solution is the natural set-up for studying the analog Hawking effect also 
in the present case. The present analysis is in preparation of the 
analysis of the Hawking effect in the geometrical setting described above.

\section{The relativistic Kerr-$\Phi\Psi$ model in a cylindrical fibre}\label{sec:classical}
Let us consider the electromagnetic field in a cylindrical cavity, along the $z$ direction, where an Hopfield dielectric is at rest in the lab, filling the region $C_{\chi}=\{ (t,x,y,z)| -L\leq z\leq L,\ x^2+y^2\leq R^2 \}$. 
In general, we consider inertial frames which are boosted in the $z$
direction with respect to the Lab frame. See also figure \ref{fig:geom}. 

\begin{figure}[hbtp!]
\includegraphics[scale=0.55]{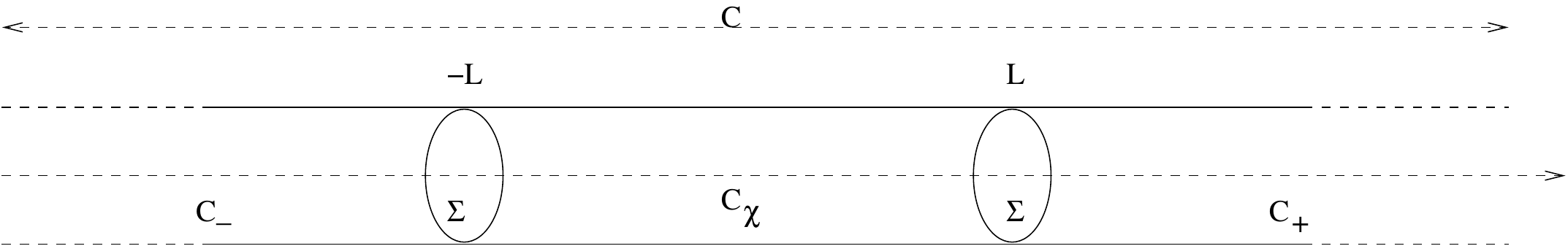} 
\caption{The geometry for the problem at hand is displayed. The region $C_\chi$ contains the dielectric medium, the other two cylindrical 
regions $C_\pm$ with the same radius $R$  are void. The inner boundaries indicated in the text as $\Sigma_{\pm L}$ are for simplicity 
both indicated with $\Sigma$. Again, for simplicity, we have not indicated the other boundaries we take into consideration in the main text.}
\label{fig:geom}
\end{figure}

If $\pmb n$ is the four-vector with covariant components $\underline n=(0,0,0,1)$ in the Lab frame, then, in an arbitrary inertial frame centered in $\pmb 0$ (the origin of the Lab frame) the confining cylindrical region
is $C=\{ \pmb x \in M^{1,3}_{\pmb 0}|  -(\pmb x- (\pmb x \cdot \pmb v)\pmb v+ (\pmb x \cdot \pmb n)\pmb n)^2\leq R^2 \}$, where $\underline v=(1,0,0,0)$ in the Lab frame, and the dielectric region is 
$C_\chi=\{ \pmb x \in C_\chi | -L\leq \pmb x\cdot \pmb n \leq L\}$.

The $\phi-\psi$ model is thus described by the action principle as follows:
\beq
S[\phi,\psi]=\frac 12 \int_C\partial_\mu \phi \partial^\mu \phi\ d^4x +\int_{C_\chi} \biggl(\frac 12 v^\mu\partial_\mu \psi v^\nu \partial_\nu \psi-\frac {\omega_0^2}2 \psi^2 -g\phi v^\mu \partial_\mu \psi\biggr) d^4x.
\eeq
This is because we require for $\psi$ to vanish outside $C_\chi$ by definition.
%%%%%%%%%%%%%%%%%%%%%%
\subsection{Equations of motion} \label{sec1}
There are several interesting discussions regarding the deduction of the equations of motion. Nevertheless we will follow a simple deduction, by using local variations and then by choosing boundary conditions. 
With ``local variations'' we mean the following. Fix a point $p$ internal to $C_\chi$ (in the topological sense) or internal to $C\backslash C_\chi$. Thus, there is at least an open set $U(p)$ 
such that is completely contained in $C_\chi$ (or in $C\backslash C_\chi$). A local variation is a variation of the fields with support in such a $U(p)$. Using local variations we get for the equations of motion
\begin{align}
\Box \phi &=0, \label{phi-0}\\
\psi &=0 ,
\end{align}
outside $C_\chi$, and
\begin{align}
\Box \phi +gv^\mu \partial_\mu \psi&=0, \label{phi-D}\\
(v^\mu\partial_\mu)^2 \psi +\omega_0^2\psi -gv^\mu\partial_\mu \phi&=0,
\end{align}
inside $C_\chi$. Now, we are left with the choice of the boundary conditions in order to completely define the theory. The boundary consists in $\partial C$, which includes the conditions at infinity and $C_\chi \cap \partial C$, and $\partial C_\chi$ 
that adds $\partial C_\chi -C_\chi \cap \partial C=: \Sigma_{-L}\cup\Sigma_L$ in obvious notations. In order to choose such conditions, let us start by considering (global) variations in $\psi$. The support of $\psi$ is compact, so we choose to work with
variations which are $\mathcal C^\infty(C_\chi)$ with support in $C_\chi$. In particular, we do not require for them to be continuous on $\Sigma_{-L}$ and $\Sigma_L$. Nevertheless, since $\pmb v$ is orthogonal to $\pmb N$, where the latter is the suitably oriented 
normal field to $\partial C$ and $\partial C_\chi$, there are no boundary terms in $\delta S$ under variations of $\psi$, and then we are not required to choose any particular condition on $\psi$ (apart from requiring that it must be at least 
$\mathcal C^2(C_\chi)$).

A little bit more involved is the (global) variation in $\phi$. In this case we have to tackle the variation
\beq
\delta \frac 12 \int_C\partial_\mu \phi \partial^\mu \phi\ d^4x= \int_C\partial_\mu \delta\phi \partial^\mu \phi\ d^4x= \int_C\partial_\mu (\delta\phi \partial^\mu \phi)\ d^4x- \int_C \delta\phi \partial_\mu \partial^\mu \phi\ d^4x.
\eeq
Requiring for $\phi$ to be continuous in $C$ implies that $\delta\phi$ must be continuous. However, requiring also the continuity of $\partial_\mu\phi$ looks too much restrictive in general. In order to manipulate the divergence in the last expression,
let us notice that if we do not require for $\partial_\mu \phi$ to be continuous on $\Sigma_0$ and $\Sigma_L$, then we cannot apply the divergence theorem directly but we need to separate $C$ into three regions as 
$C=C_\chi \cup C_- \cup C_+$, where $C_-\cup C_+:=C-C_\chi$ (with obvious notation). This way
\beq
\int_C\partial_\mu (\delta\phi \partial^\mu \phi)\ d^4x=\int_{C_-}\partial_\mu (\delta\phi \partial^\mu \phi)\ d^4x+\int_{C_+}\partial_\mu (\delta\phi \partial^\mu \phi)\ d^4x+\int_{C_\chi}\partial_\mu (\delta\phi \partial^\mu \phi)\ d^4x,
\eeq
and $\delta\phi \partial^\mu \phi$ is continuous and, indeed, smooth in each of the three regions. So we can apply the divergence theorem to each of the three regions. The result is that,  
%\begin{itemize}  \item {\tt 
if we assume that $n^\mu \partial_\mu \phi$ is continuous in an open neighbourhood of $\Sigma_A$, $A=-L,L$, 
%\end{itemize}
then
\beq
\int_C\partial_\mu (\delta\phi \partial^\mu \phi)\ d^4x=\int_{\partial C} N^\mu \partial_\mu \phi\ \delta\phi\ d^3\sigma.
\eeq
Since we do not mean to fix the value of $\phi$ on the boundary, we can get rid of the boundary term by imposing the Neuman condition $N^\mu \partial_\mu \phi|_{\partial C}=0$. More precisely this condition is clear on the cylindrical boundary
but it should also include a condition at $z\to \pm \infty$. There, the above Neuman condition looks not suitable if we want to allow for sources or, say, fluxes. In this sense it seems that at infinity some other condition could be better, but it is not
a case of interest here.

On the boundaries $\Sigma_A$ of the dielectric we are left with the condition of continuity of the normal derivative of $\phi$. Let us investigate a little bit more at this condition by looking at the equations (\ref{phi-0}) and (\ref{phi-D}).
They show that $\Box \phi$ is not continuous on $\Sigma_A$. Now, since $\psi$ vanishes outside $C_\chi$, we can write both these equations as
\beq
\partial^\mu (\partial_\mu \phi +gv_\mu \psi)=0.
\eeq
Since $\psi$ is discontinuous, $\partial^\mu \psi$ is expected to produce $\delta$-function contributions supported on $\Sigma_A$. However, $\pmb v\cdot \pmb n =0$, so that $v^\mu \partial_\mu \psi$ is discontinuous but does not contains
$\delta$ contributions. Thus, the same happens for $\Box \phi$. Now, since $C$ is contractible we can add to $\pmb n$ and $\pmb v$ two other vectors on $C$, $\pmb e_i$ ($i=1,2$) in order to get a complete constant orthonormal frame. 
Thus, we can write
\beq
0=\partial_\mu (v^\mu  \partial_{\pmb v}\phi -n^\mu  \partial_{\pmb n}\phi -e_1^\mu  \partial_{\pmb e_1}\phi -e_2^\mu  \partial_{\pmb e_2}\phi)=
\partial_{\pmb v} \partial_{\pmb v}\phi -\partial_{\pmb n}  \partial_{\pmb n}\phi -\partial_{\pmb e_1}  \partial_{\pmb e_1}\phi -\partial_{\pmb e_2}  \partial_{\pmb e_2}\phi.
\eeq
Now, $\partial_{\pmb e_1}\phi $, $\partial_{\pmb e_2}\phi $ and $\partial_{\pmb v}\phi $ are discontinuous, but no $\delta$ contribution arises in further deriving, since $\partial_{\pmb e_1} $, $\partial_{\pmb e_2} $ and $\partial_{\pmb v}$
derive in directions orthogonal to $\pmb n$ (and, so, tangent to the separating hypersurface). The remaining term $\partial_{\pmb n}\phi$ is continuous and the further derivative $\partial_{\pmb n}$ introduce at most new discontinuities. 
This shows that no further conditions are necessary for having consistent equations: all the condition we have to require inside $C$ are the continuity of $\phi$ and $\partial_{\pmb n}\phi$ everywhere, with the last vanishing on $\partial C$.

%%%%%%%%%%%%%%%%%%%%%%%%%%
\subsection{General solution.}
Let us work in a frame with four-velocity
\beq
\underline v=\gamma(\nu)(1,0,0,\nu),
\eeq
so that
\beq
\underline n=\gamma(\nu)(\nu,0,0,1).
\eeq
%%%%%%%%%%%%
\subsubsection{Outside the dielectric}
The dielectric region is defined by $-\nu t -L/\gamma(\nu) \leq z \leq -\nu t+L/\gamma(\nu).$\\
If we choose cylindrical coordinates, outside the dielectric the equations of motion are simply $\psi=0$ and 
\beq
\partial^2_t \phi -\partial^2_z \phi -\partial^2_\rho \phi -\frac 1\rho \partial_\rho \phi -\frac 1{\rho^2} \partial^2_\theta \phi=0.
\eeq
Separating the variables as 
\beq
\phi(t,\rho,z,\theta)=K \phi_T(t) \phi_R(\rho) \phi_Z(z) \phi_\Theta(\theta),
\eeq
with $K$ a constant, we find that
\beq
\phi_T(t)=e^{-ik_0 t}, \quad \phi_Z(z)=e^{ik_z z}, \quad  \phi_\Theta(\theta)=e^{im\theta},
\eeq
with $m\in \mathbb Z$, and
\beq
\phi''_R+\frac 1\rho \phi'_R+\left( k_\rho^2 -\frac {m^2}{\rho^2} \right) \phi=0,
\eeq
where $k_\rho$ must satisfy $k_0^2-k_z^2-k_\rho^2=0$. The only solutions continuous in $\rho=0$ are $\phi_R(\rho)=J_m(k_\rho \rho)$, where $J_m$ are the usual Bessel functions. The boundary condition on $\partial C$ reduces to
\beq
J'_m(k_\rho R)=0,
\eeq
so that at the end we have
\beq
\phi_R(\rho)=J_{|m|,s}(\rho)=J_{|m|}\biggl(z_{ms}\frac \rho{R}\biggr), \quad |m|,s\in \mathbb N,
\eeq
where $z_{ms}$ is the $s$-th positive zero of $J'_m$. The corresponding dispersion relations are $k_0^2=k_z^2+z_{ms}^2/R^2$, which can be codified in
\beq
K=c_{m,s}(k_0,k_z) \delta \biggl(k_0^2-k_z^2 -\frac {z_{ms}^2}{R^2}\biggr).
\eeq
In conclusion, we can write the general solution outside the dielectric in the form
\begin{align}
&\phi=\sum_{s\in \mathbb N}\sum_{m\in\mathbb Z}\int_{-\infty}^\infty \frac {dk_z}{4\pi k^0}\left(c_{m,s}(k_z) e^{-ik^0t+ik_zz+im\theta} +c^*_{m,s}(k_z) e^{ik^0t-ik_zz-im\theta}\right) J_{|m|}\biggl(z_{ms}\frac \rho{R}\biggr),\\
&\psi=0,
\end{align}
where
\beq
k^0(k_z,m,s)=\sqrt {k_z^2+\frac{z_{ms}^2}{R^2}}.
\eeq

%%%%%%%%%%%%
\subsubsection{Inside the dielectric}
Inside the dielectric, the equations take the form ($v^0=\gamma(\nu)$ and $v=\nu\gamma(\nu)$)
\begin{align}
\partial^2_t \phi -\partial^2_z \phi -\partial^2_\rho \phi -\frac 1\rho \partial_\rho \phi -\frac 1{\rho^2} \partial^2_\theta \phi+gv^0\partial_t \psi+gv\partial_z \psi&=0,\\
(v^0\partial_t+v\partial_z)^2\psi+\omega_0^2\psi-gv^0\partial_t\phi-gv\partial_z\phi&=0.
\end{align}
By means of the separation ansatz
\begin{align}
\phi(t,\rho,z,\theta)&=\tilde \phi  \phi_T(t) \phi_R(\rho) \phi_Z(z) \phi_\Theta(\theta),\\
\psi(t,\rho,z,\theta)&=\tilde \psi  \psi_T(t) \psi_R(\rho) \psi_Z(z) \psi_\Theta(\theta),
\end{align}
we find that
\begin{align}
&\psi_T(t)=\phi_T(t)=e^{-ik_0 t}, \quad \psi_Z(z)=\phi_Z(z)=e^{ik_z z}, \quad  \psi_\Theta(\theta)=\phi_\Theta(\theta)=e^{im\theta},\\
&\psi_R(\rho)=\phi_R(\rho)=J_{|m|,s}(\rho)=J_{|m|}\biggl(z_{ms}\frac \rho{R}\biggr), \quad s\in \mathbb N,\ m\in \mathbb Z,
\end{align}
and $\tilde \phi$, $\tilde \psi$ must satisfy the algebraic system
\beq
\begin{pmatrix}
-k_0^2+k_z^2+k_\rho^2 & ig\omega \\
-ig\omega & \omega_0^2-\omega^2
\end{pmatrix}
\begin{pmatrix}
 \tilde \phi \\ \tilde \psi
\end{pmatrix}
=
\begin{pmatrix}
0\\0
\end{pmatrix},
\eeq
where we have put $k_\rho=z_{ms}/R$ and
\beq
\omega=v^\mu k_\mu=k^0 v^0-vk_z.
\eeq
This has nontrivial solutions if the determinant of the matrix vanishes, which means 
\beq
DR:=k_0^2-k_z^2-k_\rho^2+\frac {g^2\omega^2}{\omega_0^2-\omega^2}=0.
\eeq
Notice that this is an implicit equation in $k_0=k^0$, since $\omega$ is a function of $k_0$.
With this condition, the solution of the algebraic system takes the form
\begin{align}
\tilde \phi&=b(k_0,k_z,m,s)\delta(DR),\\
\tilde \psi&=\frac {k_0^2-k_z^2-k_\rho^2}{ig\omega} b(k_0,k_z,m,s)\delta(DR).
\end{align}
If we set 
\beq
DR':=\partial_{k_0} DR=2k^0 \left(1+ \frac {g^2\omega_0^2 v^0}{(\omega_0^2-\omega^2)^2}\right) ,
\eeq
then we can write the general solution inside the dielectric as
\begin{align*}
&\phi=\sum_{a=1}^2\sum_{s\in \mathbb N}\sum_{m\in\mathbb Z}\int_{-\infty}^\infty \frac {dk_z}{2\pi DR'_{(a)}}\left(b_{(a)ms}(k_z) e^{-ik_{(a)}^0t+ik_zz+im\theta} +c.c.\right) J_{|m|}\biggl(z_{ms}\frac \rho{R}\biggr),\cr
&\psi=\sum_{a=1}^2\sum_{s\in \mathbb N}\sum_{m\in\mathbb Z}\int_{-\infty}^\infty \frac {dk_z}{2\pi DR'_{(a)}}\left(b_{(a)ms}(k_z) e^{-ik_{(a)}^0t+ik_zz+im\theta} -c.c.\right) 
\frac {-ig\omega_{(a)}}{\omega^2_{(a)}-\omega^2_0}J_{|m|}\biggl(z_{ms}\frac \rho{R}\biggr),
\end{align*}
where $(a)$ indicates the branch like in \cite{phi-psi,exact}.

%%%%%%%%%%%%%%%%%%%%%%%%%%%
\subsubsection{Gluing conditions}
At this point we must impose the continuity of $\phi$ and $\partial_{\pmb n}\phi$.
From now on, we will work in the lab frame where $\underline v\equiv (1,0,0,0)$ and $\omega=k^0$.
In this case, the normal direction is $z$ and the gluing conditions are
\begin{gather}
\phi(-L^-) = \phi(-L^+), \quad \phi(L^-) = \phi(L^+), \notag \\
\label{cond}
\partial_z\phi(-L^-) = \partial_z\phi(-L^+), \quad \partial_z\phi(L^-) = \partial_z\phi(L^+),
\end{gather}
where we defined for short
\beq
\phi(z^\pm):=\lim_{\epsilon\to 0} \phi(t,\rho,z\pm |\epsilon|,\theta).
\eeq
It is convenient to define a scattering basis, defined by replacing the Fourier modes with solutions of the form (omitting the angular and radial parts, at time $t=0$)
\beq
\begin{split}
\phi_k(z) =&
\left(e^{ik z} +R_{ksm} e^{-ikz})\chi_{(-\infty,-L)}(z)\right) + \\
& + \left(M_{ksm} e^{iq_{s,m}(k)z}+N_{ksm} e^{-iq_{sm}(k)z}\right)\chi_{[-L,L]}(z)+ T_{ksm} e^{ikz} \chi_{(L,\infty)},
\end{split}
\eeq
going from the left to the right, and the analogous left moving modes.
Here $R_{ksm}$, $T_{ksm}$, $M_{ksm}$, and $N_{ksm}$ are reflection and transmission coefficients. Moreover, we choose the measure to be $dk_z/(2\pi) 2k^0 $ everywhere so that inside the dielectric 
$b_{(a)ms}$ and $c_{ms}$ reabsorb the normalization factor $1+ \frac {g^2\omega_0^2 v^0}{(\omega_0^2-\omega^2)^2}$ from $DR'$.

Writing the conditions~\eqref{cond} explicitly, we obtain the following algebraic system
\begin{equation}
\begin{cases}
e^{-ikL} + R_{ksm} e^{ikL} = M_{ksm} e^{-iqL} + N_{ksm} e^{iqL} \\
k e^{-ikL} - R_{ksm} k e^{ikL} = M_{ksm} q e^{-iqL} - N_{ksm} q e^{iqL} \\
T_{ksm} e^{ikL} = M_{ksm} e^{iqL} + N_{ksm} e^{-iqL} \\
T_{ksm} k e^{ikL} = M_{ksm} q e^{iqL} - N_{ksm} q e^{-iqL}
\end{cases} 
\end{equation}
that has solution
\begin{align}
R_{ksm}=&\frac {2i(k^2-q^2_{sm}(k)) \sin (2q_{sm}(k)L) e^{2i(q_{sm}(k)-k)L}}{(k-q_{sm}(k))^2e^{4iq_{sm}(k)}-(k+q_{sm}(k))^2},\\
M_{ksm}=& -\frac{2k (k+q_{sm}(k)) e^{i(q_{sm}(k)-k)L}}{(k-q_{sm}(k))^2e^{4iq_{sm}(k)}-(k+q_{sm}(k))^2}, \\
N_{ksm}=& \frac{2k (k-q_{sm}(k)) e^{i(3q_{ms}(k)-k)L}}{(k-q_{sm}(k))^2e^{4iq_{sm}(k)}-(k+q_{sm}(k))^2}, \\
T_{ksm}=&-\frac {4kq_{sm}(k) e^{2iq_{sm}(k)L}e^{-2ikL}}{(k-q_{sm}(k))^2e^{4iq_{sm}(k)}-(k+q_{sm}(k))^2}.
\end{align}

\subsection{The scattering basis}
The positive energy scattering basis consists in the dielectric modes and the gap modes. In the lab frame, the dielectric modes are (we write the $\phi$-component only):
\begin{align}
\begin{split}
\phi^R_{D,ksm}(t,\rho,z,\theta)&=\kappa_{ksm} e^{-i\omega_{ksm}t} e^{im\theta} J_{|m|}\biggl(z_{ms}\frac \rho{R}\biggr) \Bigl[ \Bigl(e^{ik z} +R_{ksm} e^{-ikz}\Bigr)\chi_{(-\infty,-L)}(z) \\
& + \Bigl(M_{ksm} e^{iq_{s,m}(k)z}+N_{ksm} e^{-iq_{sm}(k)z}\Bigr)\chi_{[-L,L]}(z)+ T_{ksm} e^{ikz} \chi_{(L,\infty)} \Bigr],
\end{split}
\\
\phi^L_{D,ksm}(t,\rho,z,\theta)& = \phi^R_{D,ksm}(t,\rho,-z,\theta),
\end{align}
where 
\begin{align}
\omega_{ksm}&=\sqrt{k^2+k_\rho^2},\\
q_{sm}(k)^2+k_\rho^2&=(k^2+k_\rho^2)\frac {\omega_0^2+g^2-k^2-k_\rho^2}{\omega_0^2-k^2-k_\rho^2},\\
m\in \mathbb Z, & \qquad\ s\in \mathbb N, \qquad\ k>0,
\end{align}
and the coefficients are
\begin{align}
R_{ksm}=&\frac {2i(k^2-q^2_{sm}(k)) \sin (2q_{sm}(k)L) e^{2i(q_{sm}(k)-k)L}}{(k-q_{sm}(k))^2e^{4iq_{sm}(k)L}-(k+q_{sm}(k))^2},\\
M_{ksm}=& -\frac{2k (k+q_{sm}(k)) e^{i(q_{sm}(k)-k)L}}{(k-q_{sm}(k))^2e^{4iq_{sm}(k)L}-(k+q_{sm}(k))^2}, \\
N_{ksm}=& \frac{2k (k-q_{sm}(k)) e^{i(3q_{ms}(k)-k)L}}{(k-q_{sm}(k))^2e^{4iq_{sm}(k)L}-(k+q_{sm}(k))^2}, \\
T_{ksm}=&-\frac {4kq_{sm}(k) e^{2iq_{sm}(k)L}e^{-2ikL}}{(k-q_{sm}(k))^2e^{4iq_{sm}(k)L}-(k+q_{sm}(k))^2}.
\end{align}
Notice that 
\beq
|T_{ksm}|^2+|R_{ksm}|^2=1, 
\eeq
so that there is no trapping in the dielectric.
The dielectric modes are defined in the range for $\omega_{k,s,m}$ not in the gap $[\omega_0,\bar\omega]$, and do not form a complete basis for all possible initial conditions.

In order to get a complete basis we have to add the gap modes
\begin{align}
\phi^R_{G,ksm}(t,\rho,z,\theta)&=\tilde \kappa_{ksm} e^{-i\omega_{ksm}t} e^{im\theta} J_{|m|}\biggl(z_{ms}\frac \rho{R}\biggr)\sin((k+L)z) \chi_{(-\infty,-L]}(z), \\
\phi^L_{G,ksm}(t,\rho,z,\theta)&=\tilde \kappa_{ksm} e^{-i\omega_{ksm}t} e^{im\theta} J_{|m|}\biggl(z_{ms}\frac \rho{R}\biggr)\sin((k-L)z) \chi_{[L,\infty)},
\end{align}
defined for 
\beq
\omega_0^2 \leq k^2+k_\rho^2 \leq \bar\omega^2.
\eeq
These represent modes that are totally reflected by the dielectric.

The normalisation constants $\kappa_{ksm}$ and $\tilde \kappa_{ksm}$ can be computed by using the results in appendix~\ref{orthogonality}. If we choose
\beq
\kappa_{ksm}=\frac { \tilde \kappa_{ksm}}2=\frac 1{ \left( 1-\frac {m^2}{z^2_{ms}} \right)^{\frac 12} \sqrt \pi R J_{|m|}(z_{ms})},
\eeq
then the scattering solution are orthonormalised (with measure $dk/{2\pi (2k^0)}$). 

%%%%%%%%%%%%%%%%%%%%%%%
\section{Quantization}
Here we determine the scalar product and invert the expressions to compute the amplitudes fields $c$, $b$ in terms of the fields and their conjugate momenta, and impose the equal time canonical commutation relations (ETCCR).

We write the full field as a superposition of the component fields $\phi^R_D$ etc, so:
\beq
\label{fullfield}
\begin{split}
\phi(t,z,\rho,\theta) & = \sum_{s,m} \int_{R_{m,s}} \frac {dk}{4\pi k^0} \bigg\{ a^R_{D,ksm} \phi^R_{D,ksm}(t,z,\rho,\theta) + a^L_{D,ksm} \phi^L_{D,ksm}(t,z,\rho,\theta) + \\
&\quad + d^R_{G,ksm} \phi^R_{G,ksm}(t,z,\rho,\theta) + d^L_{G,ksm} \phi^L_{G,ksm}(t,z,\rho,\theta) + \text{ h.c.} \biggl\} ,
\end{split}
\eeq
where $R_{m,s}$ is the range of $k$ satisfying the right spectral conditions for any given $s,m$. In the particular case when $s=m=0$ one has that the integration is on $[0,\omega_0] \cup [\bar \omega,\infty)$ for $D$ modes and $[\omega_0,\bar \omega]$
for the $G$ modes (see \cite{phi-psi}).

Now, our purpose is to determine the commutation relations between the operators $a$'s and $d$'s: to do this, we note that
the scalar product defined in appendix~\ref{orthogonality} can also be written as
\begin{equation}
(f|\tilde f) = \Braket{\Psi, \tilde{\Psi}} := \frac{i}{2} \int d^3 \, \Psi \Omega \tilde{\Psi} ,
\end{equation}
where
\begin{equation}
\Psi =
\begin{pmatrix}
\phi \\ \psi \\ \pi_\phi \\ \pi_\psi
\end{pmatrix}
\end{equation}
and
\begin{equation}
\Omega =
\begin{pmatrix}
0_{2x2} & 1_{2x2} \\
-1_{2x2} & 0_{2x2}
\end{pmatrix} .
\end{equation}
Given the orthogonality relations, we can write the coefficients as the scalar product
\begin{equation}
a^R_{D,ksm} =  \Braket{\Psi^R_{D,ksm},\Psi_{ksm}} ,
\end{equation}
and similarly fo the other coefficients.

With some algebra, we can evaluate the products $a^R_{D,ksm} {a^R_{D,k's'm'}}^\dagger$ and ${a^R_{D,k's'm'}}^\dagger a^R_{D,ksm}$,
and imposing the ETCCR $[\phi(x),\pi_\phi(y)] = i\delta^{(3)}(x-y)$, we get that the commutator is equal to the scalar product
\begin{equation}
[a^R_{D,ksm}, {a^R_{D,k's'm'}}^\dagger] = \Braket{\Psi^R_{D,ksm},\Psi^R_{D,k's'm'}} .
\end{equation}
Therefore we can easily evaluate the commutation relations, using the orthogonality relations calculated in appendix~\ref{orthogonality}.
The relevant commutators are
\begin{align}
\Bigl[a^R_{D,ksm}, {a^R_{D,k's'm'}}^\dagger\Bigr] & = 4\pi \omega_{ksm} \delta_{mm'} \delta_{ss'} \delta(k'-k) , \\
\Bigl[a^L_{D,ksm}, {a^L_{D,k's'm'}}^\dagger\Bigr] & =  4\pi \omega_{ksm} \delta_{mm'} \delta_{ss'} \delta(k'-k) , \\
\Bigl[d^R_{G,ksm}, {d^R_{G,k's'm'}}^\dagger\Bigr] & =  4\pi \omega_{ksm} \delta_{mm'} \delta_{ss'} \delta(k'-k) , \\
\Bigl[d^L_{G,ksm}, {d^L_{G,k's'm'}}^\dagger\Bigr] & =  4\pi \omega_{ksm} \delta_{mm'} \delta_{ss'} \delta(k'-k) ,
\end{align}
and all other vanish.

%%%%%%%%%%%%%%%%%%%%%%%%%%%%%%%%%%%%%%%%%%%%%%%%
\section{The two-point function of the Kerr-$\Phi\Psi$ model}

%Assume that at $t=0$ an entering signal is represented by 
%\begin{align}
%\begin{pmatrix}
% \phi(t,\vec x) \\ \psi (t,\vec x)
%\end{pmatrix}
%=
%\begin{pmatrix}
% f(z-t -L-a) \\ 0 
%\end{pmatrix}
%end{align}
%where $a>0$, and $f(z)$ is a function with support in $(-w,w)$, with $w<L$. This is clearly a solution of the equations of motion at least for $t\leq0$. From this we get that this has non-vanishing projection only on the $m=0$, $s=0$ modes, with coefficient
%\begin{align}
% C_D^R(k)=2k (e^{-i(L+a)k}+R^*_{k00}e^{-i(L+a)k}) \hat f(k),
%\end{align}
%$\hat f(k)$ being the Fourier transform of $f$. The only other non-vanishing coefficient is $C_G^R(k)$, which is however irrelevant, so we can just consider the ``in'' configuration (when $t<0$)
%\begin{align}
% \hat \phi_R(t,x)=\int_{R_{0,0}} \frac {dk}{4\pi k}\left[ C_D^R(k) \hat a^R_{D,k00} \phi^R_{D,k00}(t,z)+\bar C_D^{R}(k) \hat a^{R\dagger}_{D,k00} \phi^R_{D,k00}(t,z)\right].
%\end{align}
%This way, we can understand the classical evolution of a pure electromagnetic signal entering the dielectric.

The full quantum theory, in absence of the Kerr nonlinear term, is fully defined by the free propagator.
Let us compute the two-point function of the free theory
\begin{align}
iG_{\psi\psi}^0(x,x') = \left. \braket{\psi(x)\psi(x')}\right|_{\lambda=0} \,.
\end{align}
In general, its explicit expression will depend on where we choose the points $x$ and $x'$. We will not consider the gap modes. In fact, the right and left gap modes can be included by noting that they are equivalent to the $\phi^R$ and $\phi^L$ 
modes respectively, with reflection coefficient $R=1$ and all other coefficients vanishing, so that, if necessary, the corresponding contributions can be deduced by taking the limit $R\to1$.

We start by discussing the case $x,x' \in C_\chi$, which is the most important one for computing Feynman diagrams in perturbation theory, when the nonlinearity is included. For simplicity, we restrict to the case $\rho=\rho'$, $\theta=\theta'$ in the following 
computation, but at the end we will give the general result:
\beq
\begin{split}\label{eq_Propag1}
iG_{\psi\psi}^0(x,x')= \theta(t-t') \sum_{s,m} \int_0^{\infty} \frac{dk}{4\pi \omega_{ksm}} \left[ \psi_{ksm}^R(x){\psi_{ksm}^R(x')}^* + \psi_{ksm}^L(x){\psi_{ksm}^L(x')}^* \right] + (x \leftrightarrow x')  \\
= \frac i {2\pi} \int d\omega \sum_{s,m} \int_0^{\infty} \frac{dk}{4\pi \omega_{ksm}}\frac{e^{-i(\omega+\omega_{ksm})(t-t')}}{\omega + i\epsilon} \Big[ \left( M_{ksm}e^{iq(k)z} + N_{ksm}e^{-iq(k)z} \right)\left( M_{ksm}^*e^{-iq(k)z'} + N_{ksm}^*e^{iq(k)z'} \right)  \\
+ \left( M_{ksm}e^{-iq(k)z} + N_{ksm}e^{iq(k)z} \right)\left( M_{ksm}^*e^{iq(k)z'} + N_{ksm}^*e^{-iq(k)z'} \right)\Big] \, \frac{g^2\omega_{ksm}^2\kappa_{sm}^2|J_m(\rho)|^2}{(\omega_{ksm}^2 - \omega_0^2)^2} \\
-\frac i {2\pi} \int d\omega \sum_{s,m} \int_0^{\infty} \frac{dk}{4\pi \omega_{ksm}}\frac{e^{i(\omega+\omega_{ksm})(t-t')}}{\omega + i\epsilon} \Big[ \left( M_{ksm}e^{iq(k)z'} + N_{ksm}e^{-iq(k)z'} \right)\left( M_{ksm}^*e^{-iq(k)z} + N_{ksm}^*e^{iq(k)z} \right) \\
+ \left( M_{ksm}e^{-iq(k)z'} + N_{ksm}e^{iq(k)z'} \right)\left( M_{ksm}^*e^{iq(k)z} + N_{ksm}^*e^{-iq(k)z} \right)\Big] \, \frac{g^2\omega_{ksm}^2\kappa_{sm}^2|J_m(\rho)|^2}{(\omega_{ksm}^2 - \omega_0^2)^2}  \\
= \frac i {2\pi} \int d\tilde\omega \sum_{s,m} \int_0^{\infty} \frac{dk}{4\pi \omega_{ksm}} \Big[ 2\left(|M_{ksm}|^2 + |N_{ksm}|^2\right)\cos(q(k)(z-z')) + 2\text{Re}(M_{ksm}N_{ksm}^*)\cos(q(k)(z+z')) \Big] \\
\times \frac{g^2\omega_{ksm}^2\kappa_{sm}^2|
J_m(\rho)|^2}{(\omega_{kms}^2 - \omega_0^2)^2} e^{-i\tilde\omega(t-t')} \left( \frac{1}{\tilde\omega -\omega_{kms} + i\epsilon} - \frac{1}{\tilde\omega +\omega_{kms} - i\epsilon} \right)\,.
\end{split}
\eeq
In the second step we used the integral representation $\theta(\tau) = \frac i {2\pi} \int d\omega \frac{e^{-i\omega\tau}}{\omega + i\epsilon}$ for the Heaviside $\theta$-function, and in the third step we have performed the change of variables 
$\tilde\omega= \omega + \omega_{ksm}$ in the first integral, and $\tilde\omega=- \omega - \omega_{ksm}$ in the second one. Notice that the two-point function (\ref{eq_Propag1}) can be written as the sum of two functions
\beq
G_{\psi\psi}^0(x,x') = G_1(t-t',\rho,\theta,z-z') + G_2(t-t',\rho,\theta,z+z') \,. 
\eeq
This makes evident the breaking of translation invariance along the $z$ axis. While $G_1$ is translationally invariant, $G_2$ can be interpreted as depending on the reflections at the boundaries of the dielectric region.

Let us focus first on the $G_1$ part. By performing a change of variable $k = k_a(q)$, where the subscript $a=\pm$ denotes the two branches of the dispersion relation, we find
\beq
\begin{split}
 G_1(t-t',\rho,\theta,z-z') &=  \sum_{s,m,a}\int \frac {d\omega}{2\pi} \int_{0}^{\infty} \frac{dq}{2\pi DR'_a(q)} e^{-i\tilde\omega(t-t')} \frac{g^2\omega_{asm}^2(q)\kappa_{sm}^2|J_m(\rho)|^2}{(\omega_{asm}^2(q) 
 - \omega_0^2)^2} \frac{2\omega_{asm}(q)}{\omega^2 - \omega_{asm}^2(q) +i\epsilon }  \\
&\quad \times \left(|M_{asm}(q)|^2 + |N_{asm}(q)|^2\right)\left( e^{iq(z-z')} + e^{-iq(z-z')} \right) \,,
\end{split}
\eeq
where
\begin{align}
\omega_{asm}(q) &:= \omega_{k_a(q)sm}, \\
M_{asm}(q) &:= M_{k_a(q)sm}, \\
DR_a'(q) &= 2\omega_{asm}(q)\left(1 + \frac{g^2\omega_0^2}{(\omega_{asm}^2(q) - \omega_0^2)^2}\right).
\end{align}
Notice that $\omega_{asm}(q)\equiv \omega_a \left(\sqrt{q^2 + k_\rho^2}\right)$, where $\omega_a(\cdot)$ denotes the two solutions of the dispersion relation as in \cite{phi-psi}. 
After defining $k_a(-q):=-k_a(q)$, we can rewrite the integral in $q$ over the whole $(-\infty,+\infty)$ range. Performing the sum over $a$ explicitly and extending and noting that $M_{asm}(-q)=M_{asm}^*(q)$, we obtain the final expression
\beq
\begin{split}\label{eq_Propag3}
&G_1(t-t',\rho,\theta,z-z') =  \sum_{s,m} \int \frac {d\omega}{2\pi} \frac {dq}{2\pi} \kappa_{sm}^2|J_m(\rho)|^2 e^{-i\omega(t-t') + iq(z-z')}   \frac{\omega^2 - q^2 - k_\rho^2}{(\omega^2 - q^2 - k_\rho^2)(\omega^2 - \omega_0^2) - g^2\omega^2}   \\ &
\times\frac{1} {(\omega^2 - q^2 - k_\rho^2)\,
(\omega_{+sm}^2(q) - \omega_{-sm}^2(q))}  [(\omega^2 - \omega_{-sm}^2)(\omega_{+sm}^2 - q^2 - k_\rho^2)(|M_{+sm}(q)|^2+|N_{+sm}(q)|^2) \\
&- (\omega^2 - \omega_{+sm}^2)(\omega_{-sm}^2 - q^2 - k_\rho^2)(|M_{-sm}(q)|^2+|N_{-sm}(q)|^2)]    \\
&=:  \sum_{s,m} \int \frac {d\omega}{2\pi} \frac {dq}{2\pi} \,\kappa_{sm}^2|J_m(\rho)|^2 \,e^{-i\omega(t-t') + iq(z-z')}  D(\omega,q,s,m)\, \ell_1(\omega,q,s,m) \,,
\end{split}
\eeq
where $D(\omega,q,s,m)$ is the factor in the first line, which equals the free propagator $G_{\psi\psi}^0$ computed in \cite{phi-psi} for the case of the infinite dielectric medium. Also we defined $\ell_1(\omega,q,s,m)$,  
%as the second line of the equation: it 
which includes the corrections due to the reflections and transmissions appearing in the finite dielectric. Notice that in the case $|M_q|^2+|N_q|^2=1$ we have $\ell_1(\omega,q) = 1$.
From the expression (\ref{eq_Propag3}) we can read the Fourier transform of the $G_1$ part of the propagator. It has the same poles as the ones in the infinite dielectric case, corresponding to the Sellmeier dispersion relation.

It can be of interest to understand the asymptotic behaviour of the factor $\ell_1$ for large momenta. For the $|M_{asm}(q)|^2+|N_{asm}(q)|^2$ factor, we have
\beq
\label{eq_MNfactor}
|M_{asm}(q)|^2+|N_{asm}(q)|^2 = \frac{k_a^2(q)\left(k_a^2(q) + q^2\right)}{\left(k_a^2(q) + q^2\right)^2 \sin^2(2qL) + 4q^2 k_a^2(q) \cos^2(2qL)} \,.
\eeq
In the case $a=+$, we have $k_a(q) \sim q$ for $q \gg \omega_0$. Therefore, in the limit $q\rightarrow \infty$ the whole factor (\ref{eq_MNfactor}) tends to $1/2$.
In the case $a=-$, instead, $\lim_{q\rightarrow\infty} k_a(q) = \omega_0$, and equation~\eqref{eq_MNfactor} has a point dependent limit. Indeed, we can take a succession ${q_n}$ such that $\sin(2q_n L) = 0$, in such a way that for $n\rightarrow\infty$ 
equation (\ref{eq_MNfactor}) tends to $1/4$. For all other successions such that $\sin(2q_n L) = C \neq 0$, we have
$$ |M_{-sm}(q)|^2+|N_{-sm}(q)|^2 \sim \frac{\omega_0^2}{q^2 \sin^2C} \rightarrow 0 \,.$$
Therefore, we see that the Fourier transform of the propagator~\eqref{eq_Propag3} is $\sim q^{-2}$, with the exception of small neighbourhoods of the points $q_n=n\pi/(2L)$, where sharp peaks of height $1/4$ appear.

%\vspace{3mm}

For the $G_2$ part of the propagator, with similar manipulations, we obtain
\beq
\begin{split}
&G_2(t-t',\rho,\theta,z+z') =  \sum_{s,m} \int \frac {d\omega}{2\pi} \frac {dq}{2\pi} \kappa_{sm}^2|J_m(\rho)|^2  e^{-i\omega(t-t') + iq(z+z')} \frac{\omega^2 - q^2 - k_\rho^2}{(\omega^2 - q^2 - k_\rho^2)(\omega^2 - \omega_0^2) - g^2\omega^2} \\
& \times \frac{(\omega^2 - \omega_{-sm}^2)(\omega_{+sm}^2 - q^2 - k_\rho^2)\text{Re}(M_{+sm}(q)N_{+sm}(q)^*) - (\omega^2 - \omega_{+sm}^2)(\omega_{-sm}^2 - q^2 - k_\rho^2)\text{Re}(M_{-sm}(q)N_{-sm}(q)^*)} {(\omega^2 - q^2 - k_\rho^2)\,
(\omega_{+sm}^2(q) - \omega_{-sm}^2(q))} \\
&=:  \sum_{s,m} \int \frac {d\omega}{2\pi} \frac {dq}{2\pi} \,\kappa_{ms}^2|J_m(\rho)|^2 \, e^{-i\omega(t-t') + iq(z+z')}  D(\omega,q,s,m)\, \ell_2(\omega,q,s,m)
\end{split}
\eeq
The factors $M_{asm}(q)N_{asm}(q)^*$ have a very similar behaviour as the factors $|M_{asm}(q)|^2+|N_{asm}(q)|^2$ studied before, so also the function $G_2$ is vanishing for large $q$.

The final expression of the propagator, in the general case $\rho'\neq\rho$, $\theta'\neq\theta$, is therefore:
\beq
\begin{split}
G_{\psi\psi}^0(t-t',\rho,\rho',\theta-\theta',z,z')\,\chi_{[-L,L]}(z)\,\chi_{[-L,L]}(z')=  \sum_{s,m} \int \frac{d\omega}{2\pi} \frac{dk}{2\pi}  e^{-i\omega(t-t')} J_{|m|,s}(\rho)J_{|m|,s}(\rho')e^{im(\theta-\theta')} \\
\times \kappa_{sm}^2D(\omega,q,s,m) \left[ \ell_1(\omega,q,s,m)
e^{ iq(z-z')} + \ell_2(\omega,q,s,m) e^{ iq(z+z')}  \right] .
\end{split}
\eeq
The remaining components of the propagator, $G_{\phi\phi}$ and $G_{\phi\psi}$, can be derived in a very similar way. We express the result as a matrix propagator
\begin{align}
\begin{pmatrix}
G_{\phi\phi}^0 & G_{\phi\psi}^0 \\ G_{\psi\phi}^0 & G_{\psi\psi}^0
\end{pmatrix}\,\chi_{[-L,L]}(z)\,\chi_{[-L,L]}(z') = \mathcal{G}_1(\tau,\rho,\rho',\Theta,\xi_-) + \mathcal{G}_2(\tau,\rho,\rho',\Theta,\xi_+) \,,
\end{align}
where $\tau=t-t'$, $\Theta = \theta-\theta'$, $\xi_\mp=z\mp z'$. 
The matrices $\mathcal{G}_{a}$, $a=1,2$, result as the Fourier transform in $\tau$ and $\xi_\mp$ of
\begin{align}
\tilde{\mathcal{G}}_{a}(\omega,\rho,\rho',\Theta,k) = \sum_{s,m} \kappa_{sm}^2J_{|m|,s}(\rho)J_{|m|,s}(\rho')e^{im\Theta} \frac{1}{(\omega^2 - k^2 - k_\rho^2)(\omega^2 - \omega_0^2) - g^2\omega^2} \mathcal M_{a}\,,
\end{align}
The matrix $\mathcal{M}_1$ has the following components (here we omit the obvious dependences on $m$, $s$ and $q$):
\begin{align}
\mathcal{M}_1^{\phi\phi} &= \frac{(\omega^2 - \omega_{-}^2)(\omega_{+}^2 - \omega_0^2)(|M_{+}|^2+|N_{+}|^2) - (\omega^2 - \omega_{+}^2)(\omega_{-}^2 - \omega_0^2)(|M_{-}|^2+ |N_{-}|^2)} {(\omega_{+}^2 - \omega_{-}^2)} ,\\
\mathcal{M}_1^{\phi\psi} &= \left(\mathcal{M}_1^{\psi\phi} \right)^* =  -ig\omega\, \frac{(\omega^2 - \omega_{-}^2)(|M_{+}|^2+|N_{+}|^2) - (\omega^2 - \omega_{+}^2)(|M_{-}|^2+ |N_{-}|^2)} {(\omega_{+}^2 - \omega_{-}^2)} , \\
\mathcal{M}_1^{\psi\psi} &= \frac{(\omega^2 - \omega_{-}^2)(\omega_{+}^2 - q^2 - k_\rho^2)(|M_{+}|^2+|N_{+}|^2) - (\omega^2 - \omega_{+}^2)(\omega_{-}^2 - q^2 - k_\rho^2)(|M_{-}|^2+ |N_{-}|^2)} {(\omega_{+}^2 - \omega_{-}^2)} .
\end{align}
The matrix $\mathcal M_2$ is obtained from $\mathcal{M}_1$ by replacing the factors $(|M_a|^2+|N_a|^2)$ by $\text{Re}(M_aN_a^*)$.

The calculations for the case $z<-L$, $z'>L$ and $z,z'<-L$  are very similar, and actually simpler, to the previous case. The $\phi$--$\phi$ propagator in this case is the only non vanishing one, and it is given by
\begin{align}
\begin{split}
G_{\phi\phi}^0&(t-t',\rho,\theta,z,z') \,\chi_{[-\infty,-L]}(z)\,\chi_{[L,+\infty]}(z') =  \\ &\sum_{s,m} \int \frac{d\omega}{2\pi} \frac{dk}{2\pi}  e^{-i\omega(t-t')} \kappa_{sm}^2 J_{|m|,s}(\rho)J_{|m|,s}(\rho')e^{im(\theta-\theta')}  \frac{1}{\omega^2 - \omega_{ksm}^2} T_{ksm}^*e^{ik(z-z')}  .
\end{split} \\
\begin{split}
	G_{\phi\phi}^0&(t-t',\rho,\theta,z,z') \,\chi_{[-\infty,-L]}(z)\,\chi_{[-\infty,-L]}(z') = \\  &\sum_{s,m} \int \frac{d\omega}{2\pi} \frac{dk}{2\pi}  e^{-i\omega(t-t')} \kappa_{sm}^2 J_{|m|,s}(\rho)J_{|m|,s}(\rho')e^{im(\theta-\theta')} 
	 \frac{1}{\omega^2 - \omega_{ksm}^2} 
	\Big[ e^{ik(z-z')} + R_{ksm}^*e^{ik(z+z')} \Big] .
\end{split}
\end{align}
In the second case again we find a dependence on $z+z'$ which is due to reflections; in the first case this does not happen due to the fact that $R_{kms} T_{kms}^*$ is purely imaginary, so that $R_{kms} T_{kms}^*+ R_{kms}^* T_{kms}=0$. We also notice that for $T=1$ and $R=0$ we find the usual Feynman propagator for a scalar field. 

The propagators in the cases $z<-L$, $-L<z'<L$ and  $-L<z<L$, $z'>L$ have less trivial dependence on $z$ and $z'$. The non vanishing components in this cases are  
\begin{align}
\begin{split}
G_{\phi\phi}^0&(t-t',\rho,\theta,z,z') \,\chi_{[-\infty,-L]}(z)\,\chi_{[-L,L]}(z') = \\ &\sum_{s,m} \int \frac{d\omega}{2\pi} \frac{dk}{2\pi}  e^{-i\omega(t-t')}\frac{\kappa_{sm}^2J_{|m|,s}(\rho)J_{|m|,s}(\rho')e^{im(\theta-\theta')}}{\omega^2 - \omega_{ksm}^2} \Big[  M_{ksm}^*  e^{ikz-iq(k)z'} + N_{ksm}^* e^{ikz+iq(k)z'} \Big] ,
\end{split}
\\
\begin{split}
G_{\phi\psi}^0&(t-t',\rho,\theta,z,z') \,\chi_{[-\infty,-L]}(z)\,\chi_{[-L,L]}(z') = \\ &\sum_{s,m} \int \frac{d\omega}{2\pi} \frac{dk}{2\pi}  e^{-i\omega(t-t')}\frac{\kappa_{sm}^2J_{|m|,s}(\rho)J_{|m|,s}(\rho')e^{im(\theta-\theta')}}{\omega^2 - \omega_{ksm}^2} \,
\frac{-ig\omega}{\omega_{ksm} ^2 - \omega_0^2} \Big[  M_{ksm}^*  e^{ikz-iq(k)z'} + N_{ksm}^* e^{ikz+iq(k)z'} \Big] ,
\end{split}
\\
\begin{split}
G_{\psi\phi}^0&(t-t',\rho,\theta,z,z') \,\chi_{[-\infty,-L]}(z)\,\chi_{[-L,L]}(z') = \\ &\sum_{s,m} \int \frac{d\omega}{2\pi} \frac{dk}{2\pi}  e^{-i\omega(t-t')}\frac{\kappa_{sm}^2J_{|m|,s}(\rho)J_{|m|,s}(\rho')e^{im(\theta-\theta')}}{\omega^2 - \omega_{ksm}^2} \,  \frac{ig\omega}{\omega_{ksm}^2 - \omega_0^2}
\Big[  M_{ksm}^*  e^{ikz-iq(k)z'} + N_{ksm}^* e^{ikz+iq(k)z'} \Big]  ,
\end{split}
\end{align}
where the factor in square brackets is the same for all three components. These expressions are obtained by making use of the following equalities, that are easily checked:
\begin{align}
M_{kms} = R_{kms} N_{kms}^* + T_{kms} M_{kms}^* \,, \\
N_{kms} = R_{kms} M_{kms}^* + T_{kms} N_{kms}^* \,.
\end{align}

%%%%%%%%%%%%%%%%%%%
\section{Solitonic solutions}

In this section we introduce a nonlinearity in the model, with the aim 
of describing the perturbation of refractive index propagating in the nonlinear dielectric medium when 
a strong laser pulse is shot into the dielectric and the Kerr effect is stimulated. The propagating perturbation 
breaks the homogeneity of the dielectric sample described in the previous section. Still, the solutions 
for the homogeneous case represent a good asymptotic scattering basis for the full nonlinear problem in the 
linearisation of the theory around the dielectric perturbation represented by the solitonic solutions we are going 
to describe.

With this aim, we add a fourth order term 
in the polarization field $\psi$, as in \cite{vigano}.  A fourth order term in nonlinear optical
media appears also e.g.~in  \cite{drummond-book}, where a fourth order term in the displacement
field can be introduced in the case of an optical fiber. We have discussed a fourth order term in the 
polarization field for the Hopfield model in \cite{hopfield-kerr}, where we have shown that our solitonic 
solutions can be associated with the Kerr effect in a proper way. It is also to be remarked that our approach does not
represent the standard way to approach the Kerr effect (see also \cite{ablowitz,boyd}). and that our solitonic solutions are more constrained that the usual solutions of the nonlinear Schr\"odinger equation studied in  \cite{ablowitz}. 
See also the discussion in \cite{universe}. 

Our non-linear theory has the following action
\begin{equation}
S[\phi,\psi] = \int_C \frac{1}{2} \partial_\mu \phi \partial^\mu \phi \, d^4x +
\int_{C_\chi} \biggl[ \frac{1}{2} (v^\mu \partial_\mu \psi)^2 - \frac{{\omega_0}^2}{2} \psi^2 -
g \phi v^\mu \partial_\mu \psi - \frac{\lambda}{4!} \psi^4 \biggr] \, d^4x ;
\end{equation}
the equations of motion are
\begin{align}
\square\phi + gv^\mu\partial_\mu\psi &= 0 , \\
(v^\mu\partial_\mu)^2\psi + \omega_0^2\psi - gv^\mu\partial_\mu\phi + \frac{\lambda}{3!} \psi^3 &= 0 .
\end{align}
In lab $v^\mu=(1,0,0,0)$, so the equations become
\begin{align}
\square\phi + g\dot{\psi} &= 0 , \\
\ddot{\psi} + \omega_0^2\psi - g\dot{\phi} + \frac{\lambda}{3!} \psi^3 &= 0 .
\end{align}
We are not looking for the general solution of the system above, still we are interested in finding out analytical solutions.  
We attempt the ansatz
\begin{gather}
\phi(t,z,\rho,\theta) = f(z-Vt) Y(\rho,\theta) , \\
\psi(t,z,\rho,\theta) = h(z-Vt) Y(\rho,\theta) ;
\end{gather}
the radial and angular parts can be separated if $Y=const$~\footnote{This would be equivalent to a calculation 
involving only $s$-waves in presence of spherical symmetry.} to obtain the equations
\begin{align}
(1-V^2)f'' + gVh' &= 0 , \\
V^2 h'' + gVf' + \omega_0^2h + \frac{Y^2\lambda}{3!}h^3 &= 0 ,
\end{align}
where the prime stands for the derivative with respect to the argument $z-Vt$.

Integrating the first equation and inserting in the second, after a new integration we obtain
\begin{equation}
\label{quartic}
\frac{Vh'}{i\sqrt{\frac{Y^2\lambda}{12}h^4 + (\omega_0^2-g^2V^2\gamma^2)h^2 + 2g\kappa Vh - 2\chi}} = 1 ,
\end{equation}
where $\kappa$ and $\chi$ are constants of integration, and $\gamma=(1-V^2)^{-\frac 12}$. We will also use the notation
\begin{align}
 v:=\gamma V.
\end{align}
It is worth to mention here that we accept solution having finite energy. It is immediate to see that the energy density inside the dielectric is
\beq
\mathcal E=\frac 12 \dot \phi^2+\frac 12 \nabla \phi \cdot \nabla \phi+\frac 12 \dot \psi^2+\frac 12 \omega_0^2\psi^2 +\frac {\lambda}{4!}\psi^4
=Y^2(\chi + (gV\gamma^2 h(z-Vt)-\kappa)^2-(1-V^2)\kappa).
\eeq
Thus, the energy is finite if $h$ has no poles. Notice that, in particular, $h$ has to be limited, which implies that the quartic radicand in \eqref{quartic} must have real roots. So, the constants must be constrained in order to ensure this condition.

\subsection{$\kappa=\chi=0$}
In this case, the solution inside the dielectric is the solitonic solution obtained in \cite{vigano}. 
With $a=\sqrt{ \frac{12}{\lambda Y^2} (g^2v^2 - \omega_0^2) }$ and
$b=\frac{1}{v} \sqrt{g^2v^2 - \omega_0^2 }$, we find (for $-L \leq z \leq L$)
\begin{align}
h & = \frac{a}{\cosh(b\gamma(z-Vt))} , \\
f & = \frac{2agv}{b} \arctan \biggl[ \tanh \biggl(\frac{b}{2}\gamma(z-Vt)\biggr) \biggr] ,
\end{align}
while the solution in vacuum is a superposition of a progressive and a regressive wave,
whose form is determined by the continuity of $\phi$ and $\partial_z\phi$, as discussed in section~\ref{sec1}:
\begin{align}
\partial_z f=
\begin{cases}
 \frac{1+V}{2} \frac{ag\gamma v}{\cosh[b\gamma(L(1-V)-V(z-t))]} + \frac{1-V}{2} \frac{ag\gamma v}{\cosh[b\gamma(L(1+V)+V(z+t))]}, & z\le -L\\
 \frac{1+V}{2} \frac{ag\gamma}{\cosh[b\gamma(L(1-V)+V(z-t))]} + \frac{1-V}{2} \frac{ag\gamma}{\cosh[b\gamma(L(1+V)-V(z+t))]}, & z\ge L
\end{cases} .
\end{align}
Note that the solution exists only if $v>\omega_0/g$, that is
\begin{align}
 V^2>\frac {\omega_0^2}{g^2+\omega_0^2}.\label{condition}
\end{align}

\subsection{$\kappa\neq0$, $\chi\neq0$}
We look for a M\"obius transformation 
\begin{equation}
h=\frac {as+b}{cs+d} ,
\end{equation}
which maps~\eqref{quartic} into the form
\begin{equation}
\frac iV=\frac {s'}{\sqrt {4s^3-g_2s-g_3}}. \label{equation}
\end{equation}
The assumption that the original quartic equation has at least one real root ensures that such a transformation exists with real coefficients $a,b,c,d$ that can be chosen to satisfy $ad-bc=\pm 1$, see Appendix \ref{appendice-B}.
Equation \eqref{equation} has general solution
\begin{equation}
s(x)=\wp(g_2,g_3;i(x-x_0)/V)=-\wp(g_2,-g_3;(x-x_0)/V),
\end{equation}
where $g_2$, $g_3$ are defined in Appendix \ref{appendice-B} and $\wp$ is the Weierstrass elliptic function defined by %(using notations in [PASTRAK])
\begin{align}
\wp(z)=\frac 1{z^2} +\sum_{(n,m)\in \mathbb Z^2-\{0,0\}} \left( \frac 1{(z+n\omega_1+m\omega_2)^2}-\frac 1{(n\omega_1+m\omega_2)^2} \right)
\end{align}
with $\omega_1$, $\omega_2$ the two periods satisfying $\tau:=\omega_2/\omega_1\notin \mathbb R$,
and $x_0$ is an integration constant. Indeed, we can be more precise and notice that there are two distinct situations, when all three roots of $4z^3-g_2z-g_3$ are distinct. In our case, $g_2$ and $g_3$ are real and so we may have three real roots 
$e_3<e_2<e_1$, or one real root $e_2$ and two complex roots $e_1, e_3$, with $e_1=\bar e_3$. Let us shortly discuss the two cases.

\subsubsection{3 REAL ROOTS} $e_3<e_2<e_1$. The periods of $\wp(g_2,-g_3,z)$ are
\begin{align}
 \omega_1&= 2\int_{e_1}^\infty \frac {dz}{\sqrt {4z^3-g_2z+g_3}}\ \in \mathbb R_{>}, \\
 \omega_2&= 2i\int_{-\infty}^{e_3} \frac {dz}{\sqrt {-4z^3+g_2z-g_3}}\ \in i\mathbb R_{>},
\end{align}
and we get two distinct solutions for $h$. The first one is
\begin{equation}
h(z-Vt)=\frac {a \wp(g_2,-g_3;t-(z-z_0)/V)-b}{c\wp(g_2,-g_3;t-(z-z_0)/V)-d}.\label{weierstrass-1}
\end{equation}
In this case $\wp$ assumes all values in $[e_1,\infty)$. Since we are interested in solutions with $h$ of class $C^2$ everywhere inside the dielectric, we must discard solutions such that the denominator above vanishes somewhere. This happens only if the
condition
\beq
 ce_1 -d>0 \label{condizione-1}
\eeq
is satisfied. Analysing this condition in general depends on several details. A partial analysis can be found in appendix~\ref{appendice-B}. This solutions represent trains of pulses having period $\omega_1$, and moving with constant velocity as we can see from figure~\ref{w1}.
\begin{figure}
\centering
\includegraphics[scale=0.5]{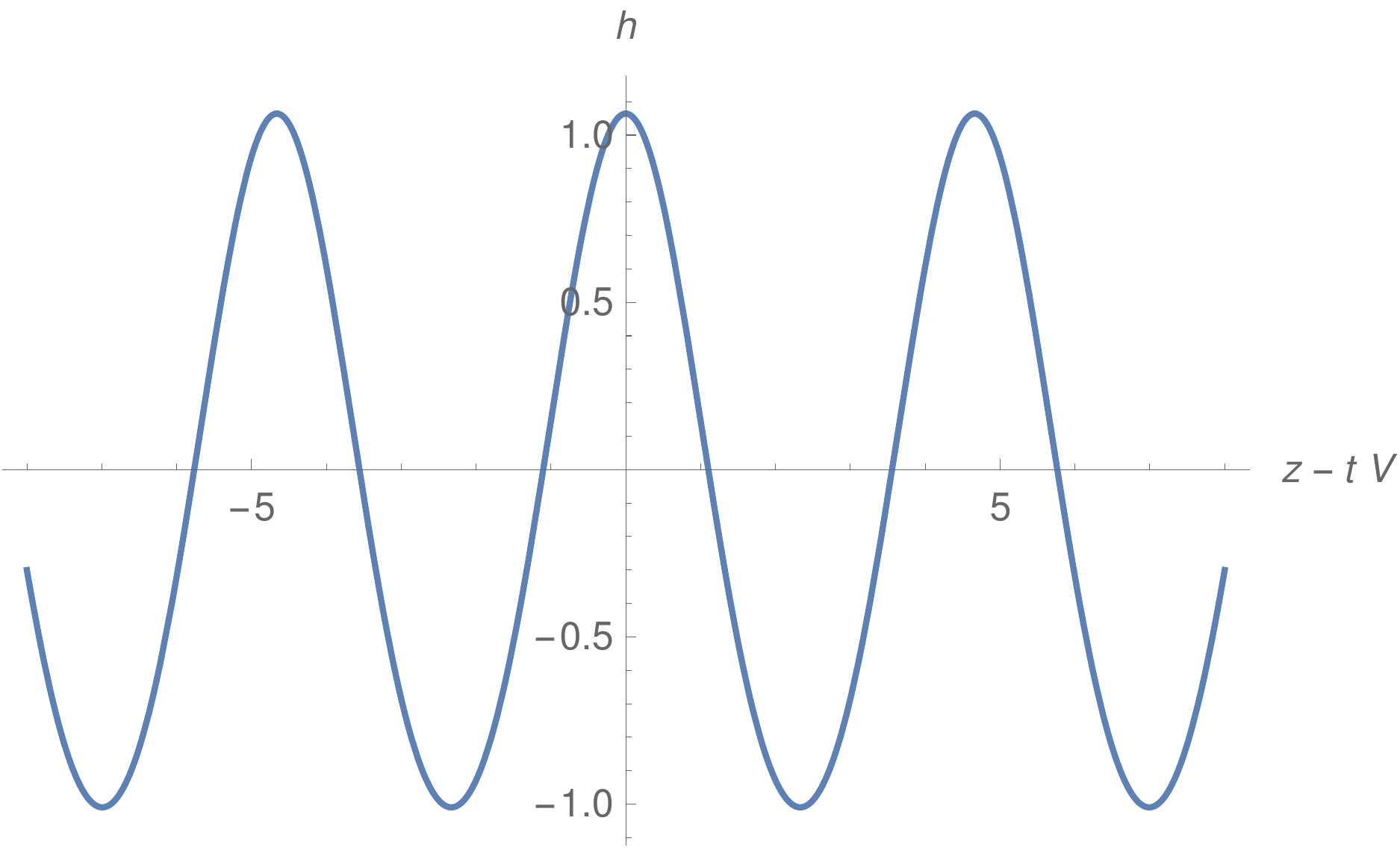}
\caption{Plot of the solution~\eqref{weierstrass-1} corresponding to the three real root case. The parameters have the following values:
$Y=\omega_0=g=1$, $\lambda=V=0.5$, $\chi=2$, $\kappa=3$, $z_0=0$.}
\label{w1}
\end{figure}

The second solution is 
\begin{equation}
\label{weierstrass-2}
h(z-Vt)=\frac {a \wp(g_2,-g_3;t-(z-z_0)/V+\frac {\omega_2}2)-b}{c\wp(g_2,-g_3;t-(z-z_0)/V+\frac {\omega_2}2)-d}.
\end{equation}
In this case $\wp$ oscillates in the interval $[e_3,e_2]$, which is acceptable if the condition 
\beq
 \frac{d}{c} \notin [e_3,e_2]
\eeq
is satisfied. Again, we get a train of pulses which is shifted along the horizontal axis, as shown in figure~\ref{w2}.

\subsubsection{1 REAL ROOT}  $e_2\in\mathbb R$. In this case we have two complex conjugate periods $\omega_1=\omega$ and $\omega_2=\bar \omega$ with
\begin{align}
 \omega=\int_{e_2}^\infty \frac {dz}{\sqrt {4z^3-g_2z+g_3}}+i \int_{-\infty}^{e_2} \frac {dz}{\sqrt {-4z^3+g_2z-g_3}}.
\end{align}
In this case there is only one kind of solutions, having the form (\ref{weierstrass-1}) with condition (\ref{condizione-1}). This represents a train of pulses having period $\omega+\bar\omega$, like for example in figure~\ref{w3}.
\begin{figure}
\centering
\includegraphics[scale=0.5]{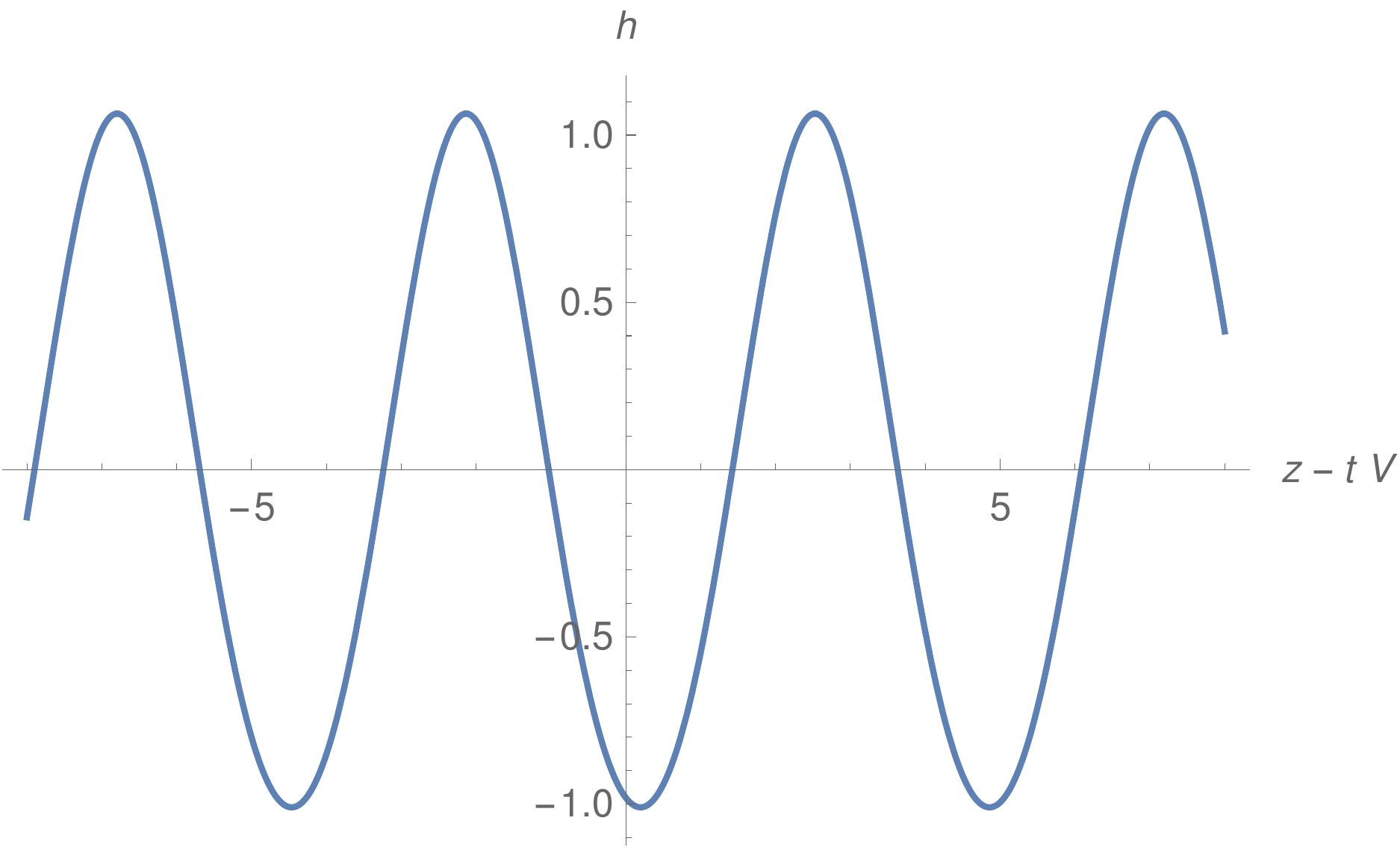}
\caption{Plot of the solution~\eqref{weierstrass-2} corresponding to the three real root case. The parameters have the following values:
$Y=\omega_0=g=1$, $\lambda=V=0.5$, $\chi=2$, $\kappa=3$, $z_0=0$.}
\label{w2}
\end{figure}

\subsubsection{ELEMENTARY SOLUTIONS} These solutions correspond to the cases of degenerate roots and could be directly deduced as particular cases of the Weierstrass cases. However, since classifying all possible Weierstrass configurations is quite cumbersome,
as shown in Appendix \ref{appendice-B}, it is easier to construct them directly.
In order to obtain a solution expressible in an algebraic form, we set equal to zero the discriminant of the fourth degree polynomial at the denominator of~\eqref{quartic},
and resolve it for the constant $\lambda$ (here $\Omega:=\omega_0^2-g^2v^2$):
\begin{equation}
Y^2\lambda_\pm =
-\frac{3}{64\chi^3}
\Bigl( 27g^4\kappa^4V^4+72g^2\kappa^2V^2\chi\Omega+32\chi^2\Omega^2 \pm
\sqrt{g^2\kappa^2V^2(9g^2\kappa^2V^2+16\chi\Omega)^3} \Bigr) ;
\end{equation}
$\lambda_\pm$ is real if and only if $\Omega \chi \geq -9g^2\kappa^2V^2/16$. Recall that at least one between $\lambda_+$ and $\lambda_-$ must be positive in order to be consistent with our physical assumptions.

On the other hand, with the aid of the general theory of quartic equations \cite{Dickson},
we discover the nature of the roots of our polynomial, depending of how the parameters change.
It holds (one must choose the positive one between $\lambda_+$ and $\lambda_-$):
\begin{itemize}
\item if $\Omega<0$ and
$-\frac{3\Omega^2}{2\lambda_\pm} < \chi < \frac{\Omega^2}{2\lambda_\pm}$,
then we have four real roots, of which two are double;
\item if $\Omega<0$ and $\chi<-\frac{3\Omega^2}{2\lambda_\pm}$, there are a double real root and two complex conjugate roots;
\item if $\Omega>0$ and $\chi<0$, then there are again a double real root and two complex conjugate roots.
\end{itemize}
This is what the theory tells us. We can say something more: from the equation $\frac{\lambda}{12}h^4+\Omega h^2+2g\kappa V h-2\chi=0$, we see that if $\chi>0$, it can be chosen $h$ such that the polynomial is zero.
So, we conclude that for every $\chi>0$ the equation has two double real roots.

\begin{figure}
\centering
\includegraphics[scale=0.5]{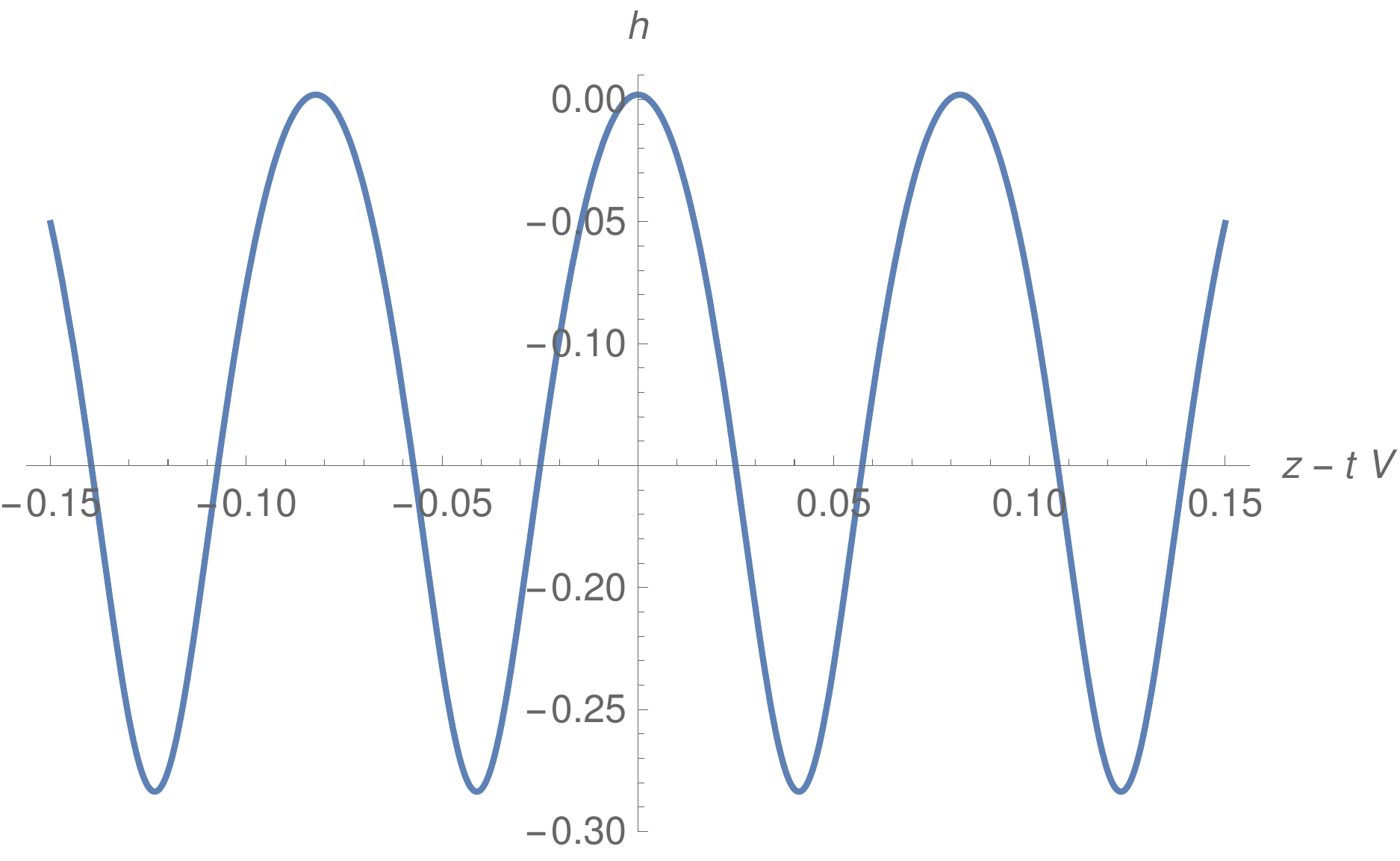}
\caption{Plot of the solution~\eqref{weierstrass-1} corresponding to the one real root case. The parameters have the following values:
$Y=10^3$, $\omega_0=g=1$, $\lambda=V=0.5$, $\chi=2$, $\kappa=10^3$, $z_0=0$.}
\label{w3}
\end{figure}

Note that not all of these conditions are compatible with our choice of $\lambda_\pm$.
Consider, for example, $\lambda_+$: if we want to guarantee the positivity of $\lambda_+$, then the combination $\Omega>0$, $\chi>0$ is not acceptable, while the other combinations allow $\lambda_+>0$. Moreover, the reality condition 
$\chi\Omega \geq -9g^2\kappa^2V^2/16$ provide further constraints.
Independently by the value of the parameters, the general solution of~\eqref{quartic} can be easily written: if $\alpha$ is the double real root, and $\beta$, $\delta$ are real (or complex), then, recalling that in our case $\delta+\beta=-2\alpha$,
\begin{equation}
h =
\Biggl[
\frac{\beta-\delta}{2(\alpha-\beta)(\alpha-\delta)}
\cos\biggl(\sqrt{\frac{\lambda(\alpha-\beta)(\alpha-\delta)}{12}} \frac{z-Vt}{V}\biggr)
- \frac{2\alpha}{(\alpha-\beta)(\alpha-\delta)} \Biggr]^{-1} + \alpha .
\end{equation}
The term $(\beta-\delta)$ shows that if $\beta$ and $\delta$ are not real, then the final solution is not real too and must be excluded. So, the parameters must be such that the roots are all real. In this case
\begin{align}
 (\alpha-\beta)(\alpha-\delta)=4\alpha^2-(\alpha+\beta)^2=4\alpha^2-(\alpha+\delta)^2
\end{align}
shows that we may have both signs for $(\alpha-\beta)(\alpha-\delta)$, so that we have
\begin{align}
h =
\Biggl[
\frac{\beta-\delta}{2(\alpha-\beta)(\alpha-\delta)}
\cos\biggl(\sqrt{\frac{\lambda(\alpha-\beta)(\alpha-\delta)}{12}} \frac{z-Vt}{V}\biggr)
- \frac{2\alpha}{(\alpha-\beta)(\alpha-\delta)} \Biggr]^{-1} + \alpha, &\ {\rm if}\ (\alpha-\beta)(\alpha-\delta)>0,\\
h =
\Biggl[
\frac{\beta-\delta}{2(\alpha-\beta)(\alpha-\delta)}
\cosh\biggl(\sqrt{\frac{\lambda(\alpha-\beta)(\alpha-\delta)}{12}} \frac{z-Vt}{V}\biggr)
- \frac{2\alpha}{(\beta-\alpha)(\alpha-\delta)} \Biggr]^{-1} + \alpha, &\ {\rm if}\ (\alpha-\beta)(\alpha-\delta)<0,
\end{align}
The last case includes the simplest situation $k=0$, $\chi=0$ studied above. The first case, instead, includes the simple case studied below.

%%%%%%%%%%%%%%%%%%%%%%%%%%%%%%%%%%%%%%%%%%%%%%%%%
\subsection{$\kappa\neq0$, $\chi=0$}
As a particular subcase, we can study equation~\eqref{quartic} with $\chi=0$;
here the analysis is simplified, because we have to study a cubic equation.
Requiring the discriminant of the cubic to be zero, we find
\begin{equation}
\lambda = -\frac{4\Omega^3}{9g^2\kappa^2V^2} ,
\end{equation}
so the positivity of $\lambda$ requires $\Omega<0$, that is $v>\omega_0/g$ as usual.

With this value for $\lambda$ there are three real roots, of which two are double.
The double root is $-3g\kappa V/\Omega$, and the simple root is $6g\kappa V/\Omega$.
The solution of~\eqref{quartic} is
\begin{equation}
h =
-\frac{3g\kappa V}{\omega_0^2-g^2v^2} \Biggl[
\frac{3}{\cos\Bigl( \frac{3g\kappa\sqrt{\lambda}}{2(\omega_0^2-g^2v^2)}(z-Vt) \Bigr) - 2}
+ 1 \Biggr] .
\end{equation}

\section{Conclusions}

We have studied, in the simplified framework of the $\phi\psi$-model, 
the propagation of the electromagnetic field in a spatially finite sample of dielectric medium. This situation 
is physically relevant in the Analogue  Gravity picture for the Hawking effect, as experiments involve necessarily 
finite samples of dielectrics. We have chosen to work in a cylindrical geometry, where the dielectric field fills only 
a finite cylindrical region of length $2L$ and radius $R$. The remaining region of radius $R$ is 
filled by vacuum. This may be considered as a model for a optic fiber, which are an active benchmark 
for experiments in Analogue Gravity \cite{drori}.

Our present study concerns analytical properties of the solutions for the equations of motion of the involved fields. 
As a preliminary analysis, we have considered the  boundary conditions to 
be imposed on the fields, together with a complete scattering basis and the quantization of the fields in the case of a still homogeneous dielectric sample. We have also described the propagator for the fields in the given setting.

Then we have introduced a nonlinearity in the model, 
as the dielectric media we are interested in must be associated with the Kerr effect. Indeed, as pointed out firstly in \cite{philbin}, 
a possibility to obtain analogous black hole in dielectrics consists in generating strong laser pulses which propagate inside 
a nonlinear dielectric medium. The Kerr effect gives rise to a propagating perturbation of the refractive index which plays the 
role of the analogous black hole, and is indeed involved with a horizon. Our interest has been to find out solitonic solutions 
describing the aforementioned perturbation, i.e. the background solutions around which a linearization is performed, 
and the perturbations are quantized.  
We have shown that solitonic solutions exist, representing a dielectric perturbation travelling 
with constant velocity in the direction of the cylindrical fiber axis.

Further developments can involve different aspects, all with a noticeable physical interest. One may study 
perturbation theory for the model, looking for quantum effects induced by surface effects, e.g.~transition radiation 
\cite{ginzburg}. Also, one can study absorption in the model, associated with the fourth order perturbation. 
Our main research focus, which is represented by the analogous Hawking effect, requires the analysis 
of the linearization of the model around the solitonic solution and its quantization. One may also limit to 
consider, in the comoving frame of the dielectric perturbation, the dependence of the dielectric susceptibility 
and of the resonance frequency on space (induced by the Kerr effect), and analyze the  Hawking effect with 
simpler background profiles. Future works will be devoted to the aforementioned goals.

\section*{Acknowledgements}

A.V.~was partially supported by Ministero dell'Universit\`a e della Ricerca MIUR-PRIN contract 2017CC72MK\_003.

%\section{Two points function}
%
%The two points function is defined as
%\beq
%G(x,y) = \bra{0}\phi(x)\phi(y)\ket{0} + \bra{0}\phi(y)\phi(x)\ket{0} =: G_+(x,y) + G_-(x,y) ;
%\eeq
%we report here only the expression for $G_+(x,y)$, being $G_-(x,y)$ analogue. \\
%Using the expression for the full field~\eqref{fullfield} and for the commutators, we obtain
%\beq
%\begin{split}
%G_+(x,y) = & \sum_{s,m} \int dk \biggr\{ 
%|\kappa_{ksm}|^4 4\omega_{ksm}\pi^2 R^2 \left( 1-\frac {m^2}{z^2_{ms}} \right) J^2_m(z_{ms}) J^2_m(z_{ms}\frac{\rho}{R}) \\
%& \bigg[ (1+|R|^2) (e^{ik(x-y)}+e^{-ik(x-y)}) + (R+R^*) (e^{ik(x+y)}+e^{-ik(x+y)}) \bigg] + \\
%& |\tilde{\kappa}_{ksm}|^4 \omega_{ksm}\pi^2 R^2 \left( 1-\frac {m^2}{z^2_{ms}} \right) J^2_m(z_{ms}) J^2_m(z_{ms}\frac{\rho}{R}) \sin((k+L)x)\sin((k+L)y) \biggl\} ,
%\end{split}
%\eeq
%if $x,y \in (-\infty,-L)$,
%\beq
%\begin{split}
%G_+(x,y) = & \sum_{s,m} \int dk
%|\kappa_{ksm}|^4 4\omega_{ksm}\pi^2 R^2 \left( 1-\frac {m^2}{z^2_{ms}} \right) J^2_m(z_{ms}) J^2_m(z_{ms}\frac{\rho}{R}) \\
%& \bigg[ (|M|^2+|N|^2) (e^{iq(x-y)}+e^{-iq(x-y)}) + (MN^*+M^*N) (e^{iq(x+y)}+e^{-iq(x+y)}) \bigg],
%\end{split}
%\eeq
%if $x,y \in [-L,L]$, and
%\beq
%\begin{split}
%G_+(x,y) = & \sum_{s,m} \int dk 
%|\kappa_{ksm}|^4 4\omega_{ksm}\pi^2 R^2 \left( 1-\frac {m^2}{z^2_{ms}} \right) J^2_m(z_{ms}) J^2_m(z_{ms}\frac{\rho}{R}) \\
%& \bigg[ (e^{ikx}+Re^{-ikx})(M^*e^{-iqy}+N^*e^{iqy}) + (e^{-ikx}+Re^{ikx})(M^*e^{iqy}+N^*e^{-iqy}) \bigg],
%\end{split}
%\eeq
%if $x \in (-\infty,-L)$ and $y \in [-L,L]$.

%\newpage
%%%%%%%%%%%%%%%%%%%%%%
\begin{appendix}
%%%%%%%%%%
\section{Orthogonality relations}\label{orthogonality}
For
\begin{eqnarray}
f=
\begin{pmatrix}
\phi \\ \psi
\end{pmatrix},
\qquad
\tilde f=
\begin{pmatrix}
\tilde \phi \\ \tilde \psi
\end{pmatrix}
,
\end{eqnarray}
we define
\begin{align}
j^\mu_{f,\tilde f}(\pmb x):= i [\phi^*(\pmb x) \partial^\mu \tilde \phi(\pmb x)-\tilde\phi(\pmb x) \partial^\mu \phi^*(\pmb x)+v^\mu(\psi^*(\pmb x)\dot {\tilde \psi}(\pmb x)-\tilde\psi(\pmb x)\dot {\psi}^*(\pmb x))
+gv^\mu (\psi^*(\pmb x) \tilde \phi(\pmb x)-\phi^*(\pmb x)\tilde \psi(\pmb x))].
\end{align}
It is a conserved current
\begin{eqnarray}
\partial_\mu j^\mu_{f,\tilde f}(\pmb x)=0,
\end{eqnarray}
and the scalar product is
\begin{eqnarray}
(f|\tilde f)=\int_{C_t} j^0_{f,\tilde f}(\pmb x),
\end{eqnarray}
where $C_t$ is the slice of $C$ obtained by fixing $t$.

The conservation law is particularly helpful for computing the scalar product among plane wave solutions or scattering wave solutions. These solutions have the form
\begin{eqnarray}
\phi(\pmb x)=e^{-i\omega_{ksm}t} \varphi_{_{ksm}}(\vec x) ,
\end{eqnarray}
so if we take
\begin{eqnarray}
f(\pmb x)=e^{-i\omega_{ksm}t} 
\begin{pmatrix}
\varphi_{ksm}(\vec x) \\
\varrho_{ksm}(\vec x)
\end{pmatrix},
\qquad
\tilde f(\pmb x)=e^{-i\omega_{k's'm'}t} 
\begin{pmatrix}
\varphi_{k's'm'}(\vec x) \\
\varrho_{k's'm'}(\vec x)
\end{pmatrix},
\end{eqnarray}
then
\begin{eqnarray}
\partial_0  j^0_{f,\tilde f}(\pmb x)=i(\omega_{ksm}-\omega_{k's'm'}) j^0_{f,\tilde f}(\pmb x),
\end{eqnarray}
and integrating over the spatial slice and using the continuity equation we get
\begin{align}
(f|\tilde f)=-\frac 1{i(\omega_{ksm}-\omega_{k's'm'})} \int_{\partial C_t} \vec j_{f,\tilde f}(\pmb x) \cdot \vec n(\pmb x) \ d^2\sigma(\pmb x),
\end{align}
where we have used the continuity of $\vec j_{f,\tilde f}(\pmb x) \cdot \vec n(\pmb x)$, as a consequence of the boundary conditions.
In order to compute this integral let us restrict it on the compact cylinder 
\begin{eqnarray}
C_t^Z=\{(\rho,\theta,z)\in C_t| -Z\leq z\leq Z\},
\end{eqnarray}
so that
\begin{align}
(f|\tilde f)=-\frac 1{i(\omega_{ksm}-\omega_{k's'm'})} \lim_{Z\to+\infty}\int_{\partial C^Z_t} \vec j_{f,\tilde f}(\pmb x) \cdot \vec n(\pmb x) \ d^2\sigma(\pmb x).
\end{align}
Finally, by taking into account the boundary condition for the fields we get
\begin{align}
(f|\tilde f)=-\frac 1{i(\omega_{ksm}-\omega_{k's'm'})} \lim_{Z\to+\infty} \left( \int_{D_Z} j^z_{f,\tilde f}(\pmb x)  \rho^2 d\rho d\theta -\int_{D_{-Z}} j^z_{f,\tilde f}(\pmb x)  \rho^2 d\rho d\theta \right),
\end{align}
where 
\begin{eqnarray}
D_{\pm Z}=\{(\rho,\theta,z)\in C_t| z=\pm Z \}.
\end{eqnarray}
Using
\begin{eqnarray}
\int_0^{2\pi} e^{i(m'-m)\theta} d\theta =2\pi \delta_{mm'} ,
\end{eqnarray}
for two right dielectric scattering functions we get
\beq
\begin{split}
\frac {(f|\tilde f)}{\kappa^*_{ksm}\kappa_{ksm}}&=-\frac {e^{i(\omega_{ksm}-\omega_{k's'm})t}}{i(\omega_{ksm}-\omega_{k's'm})} 2\pi \delta_{mm'}
\int_0^R J_{|m|} \biggl(z_{ms}\frac \rho R\biggr) J_{|m|} \biggl(z_{ms'}\frac \rho R\biggr) \rho d\rho\ \\
&\quad \times \lim_{Z\to+\infty}\Bigl[ (k-k') R^*_{ksm} e^{-i(k+k')Z} -(k-k') R_{k's'm} e^{i(k+k')Z} +(k+k') \\
&\quad \times \Bigl( (T^*_{ksm} T_{k's'm}+R^*_{ksm} R_{k's'm})e^{-(k-k')Z} -e^{i(k-k')Z}\Bigl) \Bigr].
\end{split}
\eeq
We first show that 
\begin{lemma}
It holds
\begin{align}
\int_0^R J_{|m|} \biggl(z_{ms}\frac \rho R\biggr) J_{|m|} \biggl(z_{ms'}\frac \rho R\biggr) \rho d\rho=\delta_{ss'} \frac {R^2}{2z^2_{ms}}(z_{ms}^2-m^2)J^2_m(z_{ms}).
\end{align}
\end{lemma}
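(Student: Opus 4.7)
The plan is to recognize this as a standard Bessel orthogonality integral on $[0,R]$ with Neumann boundary conditions (since $z_{ms}$ is a zero of $J'_m$, not of $J_m$), so the usual Dini-type normalization applies. Write $u_s(\rho) := J_{|m|}(z_{ms}\rho/R)$; then $u_s$ satisfies the Bessel equation in the Sturm--Liouville form
\begin{equation}
(\rho u_s')' + \left(\frac{z_{ms}^2}{R^2}\rho - \frac{m^2}{\rho}\right) u_s = 0,
\end{equation}
is regular at $\rho=0$, and obeys $u_s'(R) = 0$ by the definition of $z_{ms}$.

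For the off-diagonal case $s\neq s'$, I would use the standard Lagrange identity: multiply the equation for $u_s$ by $u_{s'}$ and subtract the corresponding expression with the roles swapped, then integrate over $[0,R]$. After one integration by parts, the bulk terms cancel and one is left with
\begin{equation}
\frac{z_{ms}^2 - z_{ms'}^2}{R^2}\int_0^R u_s(\rho) u_{s'}(\rho)\,\rho\,d\rho = \bigl[\rho(u_{s'} u_s' - u_s u_{s'}')\bigr]_0^R,
\end{equation}
whose right-hand side vanishes because of regularity at $\rho=0$ and the Neumann condition $u_s'(R)=u_{s'}'(R)=0$. Since the prefactor $z_{ms}^2 - z_{ms'}^2$ is nonzero for $s\neq s'$, the integral vanishes, giving $\delta_{ss'}$.

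For the diagonal case $s=s'$, I would invoke the classical Lommel normalization formula
\begin{equation}
\int_0^R J_m(k\rho)^2 \rho\,d\rho = \frac{R^2}{2}\left[J_m'(kR)^2 + \left(1 - \frac{m^2}{k^2R^2}\right)J_m(kR)^2\right],
\end{equation}
which one proves by the same Lagrange trick applied to $J_m(k\rho)$ and $J_m(k'\rho)$ and then sending $k'\to k$ via L'H\^opital (the limit is standard; the $J_m'(kR)^2$ piece comes from differentiating the boundary term with respect to the spectral parameter). Setting $k = z_{ms}/R$ kills the first term in the bracket thanks to $J_m'(z_{ms})=0$, leaving exactly $\frac{R^2}{2}(1 - m^2/z_{ms}^2)J_m(z_{ms})^2 = \frac{R^2}{2z_{ms}^2}(z_{ms}^2 - m^2)J_m(z_{ms})^2$, which is the claimed normalization.

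The only slightly delicate point is handling the Lommel limit rigorously (the $k'\to k$ step), but this is entirely classical; the rest is bookkeeping with Sturm--Liouville boundary terms.
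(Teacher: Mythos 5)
Your proof is correct and follows essentially the same route as the paper: the off-diagonal case is the identical Sturm--Liouville/Lagrange-identity argument with the boundary term killed by regularity at $\rho=0$ and the Neumann condition $J'_{|m|}(z_{ms})=0$. For the diagonal case the paper derives the normalization directly from the Bessel ODE (multiplying by $2\rho\,dJ/d\rho$ and integrating by parts) rather than quoting Lommel's formula and taking the confluent limit $k'\to k$, but these are two standard proofs of the same classical identity and the content is the same.
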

\begin{proof}
The Bessel equation can be written in the form
\begin{eqnarray}
\label{Bessel}
\frac {d}{d\rho} \biggl( \rho \frac d{d\rho} J_{|m|} \biggl(z_{ms}\frac \rho{R}\biggl) \biggr) +\biggl(\rho \frac {z^2_{ms}}R^2-\frac {m^2}\rho \biggr)J_{|m|}\biggl(z_{ms}\frac \rho{R}\biggr)=0 ,
\end{eqnarray}
from which we get
\beq
\begin{split}
& \frac d{d\rho} \biggl[ \rho J_{|m|}\biggl(z_{ms'}\frac \rho{R}\biggr) \frac d{d\rho} J_{|m|}\biggl(z_{ms}\frac \rho{R}\biggr)-\rho J_{|m|}\biggl(z_{ms}\frac \rho{R}\biggr) \frac d{d\rho} J_{|m|}\biggl(z_{ms'}\frac \rho{R}\biggr) \biggr] \\
&\quad +\frac {z^2_{ms}-z^2_{ms'}}{R^2} \rho J_{|m|} \biggl(z_{ms}\frac \rho{R}\biggl) J_{|m|} \biggl(z_{ms'}\frac \rho{R}\biggr)=0 ,
\end{split}
\eeq
that integrated from $0$ to $R$ in $d\rho$ gives
\begin{eqnarray}
\int_0^R J_{|m|} \biggl(z_{ms}\frac \rho R\biggl) J_{|m|} \biggl(z_{ms'}\frac \rho R\biggr) \rho d\rho=0 ,
\end{eqnarray}
if $s\neq s'$. Moreover, from~\eqref{Bessel} we get
\beq
\begin{split}
0&=\frac d{d\rho} \left( \rho J_{|m|} (z_{ms}\frac \rho{R}) \frac d{d\rho} \left( \rho \frac d{d\rho} J_{|m|} (z_{ms}\frac \rho R)   \right) \right)+\frac {z^2_{ms}}{R^2} 2\rho J^2_{m}(z_{ms}\frac \rho{R}) \\
&\quad +(\rho^2\frac {z^2_{ms}}{R^2}-m^2) 2J_{|m|}(z_{ms}\frac \rho{R}) \frac d{d\rho}J_{|m|}(z_{ms}\frac \rho{R})\\
&=\frac d{d\rho} \left( \rho J_{|m|} (z_{ms}\frac \rho{R}) \frac d{d\rho} \left( \rho \frac d{d\rho} J_{|m|} (z_{ms'}\frac \rho R)   \right) \right)+\frac {z^2_{ms}}{R^2} 2\rho J^2_{m}(z_{ms}\frac \rho{R}) \\
&\quad -\frac d{d\rho} \left( \rho  \frac d{d\rho}J_{|m|}(z_{ms}\frac \rho{R})  \right)^2 .
\end{split}
\eeq
After integration in $d\rho$ from $0$ to $R$, and using the definition of $z_{ms}$, we get
\begin{eqnarray}
2\frac {z^2_{ms}}{R^2} \int_0^R J^2_{|m|} \biggl(z_{ms}\frac \rho R\biggr) \rho d\rho=-R^2 J^2_{|m|} (z_{ms}) \frac {d^2}{d\rho^2} J^2_{|m|} \biggl(z_{ms}\frac \rho R\biggr)\bigg|_{\rho=R}.
\end{eqnarray}
Using again the Bessel equation we finally get the assert. 
\end{proof}
Thus
\beq
\begin{split}
\frac {(f|\tilde f)}{\kappa^*_{ksm}\kappa_{ksm}}=&-\frac {e^{i(\omega_{ksm}-\omega_{k'sm})t}}{i(\omega_{ksm}-\omega_{k'sm})} 2\pi \delta_{mm'} \delta_{ss'} \frac {R^2}{2z^2_{ms}}(z_{ms}^2-m^2)J^2_m(z_{ms})
\ \lim_{Z\to+\infty}\left[ (k-k') R^*_{ksm} e^{-i(k+k')Z} \right. \cr
& \left. -(k-k') R_{k'sm} e^{i(k+k')Z} +(k+k') \left( (T^*_{ksm} T_{k'sm}+R^*_{ksm} R_{k'sm})e^{-(k-k')Z} -e^{i(k-k')Z}\right) \right].
\end{split}
\eeq
In order to compute this limit, we rewrite it in the form
\beq
\begin{split}
\lim_{Z\to\infty} \frac {k-k'}{\omega_{ksm}-\omega_{k'sm}} &\left[ (k+k') \frac {e^{-(k-k')Z} -e^{i(k-k')Z}}{k'-k} +\left( R^*_{ksm} e^{-i(k+k')Z} -R_{k'sm} e^{i(k+k')Z} \right) \right. \cr
&\left. +(k+k') \frac {T^*_{ksm} T_{k'sm}+R^*_{ksm} R_{k'sm}-1}{k'-k}e^{-(k-k')Z} \right].
\end{split}
\eeq
Since $k$ and $k'$ are positive, the second term in the square brackets vanishes in the limit because of the Riemann--Lebesgue theorem.  
The second term vanishes for the same reason unless $k=k'$. Since $|T_{ksm}|^2+|R_{ksm}|^2=1$, it stays finite for $k=k'$ (if we take the continuation by the limit $k'\to k$). Thus, in the distributional sense, it vanishes.
So, the surviving limit is
\begin{align}
\lim_{Z\to\infty} \frac {k-k'}{\omega_{ksm}-\omega_{k'sm}} &\left[ (k+k') \frac {2i \sin ((k-k')Z)}{k'-k} \right]= 2\pi k \frac {dk}{d\omega} 2i \delta(k'-k) ,  \label{limit}
\end{align}
and we finally get
\begin{align}
(f|\tilde f)=|\kappa_{ksm}|^2 4\omega_{ksm}\pi^2 R^2 \left( 1-\frac {m^2}{z^2_{ms}} \right) J^2_m(z_{ms}) \delta_{mm'} \delta_{ss'} \delta(k'-k).
\end{align}
The same result is true for two left dielectric wave functions.

The same procedure can be used to compute the scalar product between two right (or left) gap wave functions $g$, $\tilde g$
\begin{align}
(g|\tilde g)=|\tilde \kappa_{ksm}|^2 \omega_{ksm} \pi^2 R^2 \left( 1-\frac {m^2}{z^2_{ms}} \right) J^2_m(z_{ms}) \delta_{mm'} \delta_{ss'} \delta(k'-k).
\end{align}
All other combinations vanish. It is worth to mention that for the particular case $m=0$ there is also the zero $z_{0,0}=0$, for which
\begin{align}
 (f|\tilde f)&=|\kappa_{k00}|^2 4 k\pi^2 R^2 \delta(k'-k), \\
 (g|\tilde g)&=|\tilde \kappa_{k00}|^2 k \pi^2 R^2 \delta(k'-k). 
\end{align}

%%%%%%%%%%%%%%%%%%%%%%%%%%%%%%%%%%%%%%%%%%%%
\section{Study of equation~\eqref{quartic}}
\label{appendice-B}
Let us write the quartic as
\begin{align}
 \frac{Y^2\lambda}{12}h^4 + (\omega_0^2-g^2V^2\gamma^2)h^2 + 2g\kappa Vh - 2\chi=p(h) ,
\end{align}
with
\begin{align}
 p(x)=\alpha_0 x^4+\alpha_2 x^2+\alpha_3 x+\alpha_4 \equiv \alpha_0 (x-E_0)(x-E_1)(x-E_2)(x-E_3),
\end{align}
where $E_j$ are the polynomial roots, satisfying $E_0+E_1+E_2+E_3=0$. We are assuming that there is at least one real root and so define $E_0$ to be the largest real root. Let us consider the change of variables
\begin{align}
 h=\frac {as+b}{cs+d},
\end{align}
with $a$, $b$, $c$, $d$ all real. These parameters are defined up to a global real rescaling, which can be fixed so that $ad-bc=\varepsilon$, with $\varepsilon=\pm1$. We get
\begin{align}
 h'=\frac {\varepsilon s'}{(cs+d)^2}, 
\end{align}
so that
\begin{align}
 \frac {h'}{\sqrt {p(h)}}=\frac {\varepsilon s'}{q(s)},
\end{align}
where
\begin{align}
 q(s)=a_0 s^4+a_1 s^3+a_2 s^2+a_3 s+ a_4,
\end{align}
with
\beq
 a_0 = c^4 p\biggl(\frac ac\biggr),
\eeq
which we easily set to zero by imposing
\begin{align}
 a=c E_0.
\end{align}
With this position, the remaining coefficients are 
\begin{align}
 a_1&=c^3 (b-E_0 d)\ p'(E_0), \\ 
 a_2&=(b-E_0 d)c^2 d\ p'(E_0) -(b-E_0 d) \varepsilon \frac c2\ p''(E_0), \\
 a_3&=4cd^3\ p\biggl(\frac bd\biggr)+\varepsilon d^2\ p'\biggl(\frac bd\biggr),\\
 a_4&=d^4\ p\biggl(\frac bd\biggr).
\end{align}
We have to impose the condition $a_2=0$ and $a_1=4$. Notice that $b-E_0 d=-\varepsilon/c$. Moreover, since we are assuming the roots are generic, therefore all distinct, $p'(E_0)\neq 0$. We finally get:
\beq
a=E_0 c, \qquad b = \frac c4 \left(p'(E_0)-\frac 12 E_0 p''(E_0)\right), \qquad
c= \frac 2{\sqrt {|p'(E_0)|}} \qquad d=-\frac {p''(E_0)}8 c .
\eeq
and $\varepsilon= -{\rm sign}(p'(E_0))$.
With the assumption $E_0$ real, these coefficients are all real and lead us to equation \eqref{equation}, with 
\beq
 g_2=4cd^3\ p\biggl(\frac bd\biggr)+\varepsilon d^2\ p'\biggl(\frac bd\biggr), \qquad\ g_3=d^4\ p\biggl(\frac bd\biggr).
\eeq
The conditions leading to this solution are essentially the ones guaranteeing the existence of at least one real solution of $p(x)=0$.

The discriminant of our quartic equation is
\beq
 \Delta=256 \alpha_0^3\alpha_4^3-128 \alpha_0^2 \alpha_2^2 \alpha_4^2+144 \alpha_0^2\alpha_2 \alpha_3^2 \alpha_4 -27 \alpha_0^2 \alpha_3^4+16 \alpha_0 \alpha_2^4\alpha_4-4\alpha_0 \alpha_2^3 \alpha_3^2.
\eeq
From~\cite{Dickson}, Theorem 7, we see that
if $\Delta<0$ there are always 2 real roots and 2 complex conjugate roots.
For $\Delta=0$ we boil down to the case of degenerate solutions, studied apart in the main text.
Finally, if $\Delta>0$ the only case with real roots is when
the conditions $M\equiv\alpha_0 \alpha_2<0$ and $N\equiv4\alpha_0 \alpha_4-\alpha_2^2<0$ are satisfied.
If the case, then there are four real solutions.

More explicitly, since $\alpha_0=\frac \lambda{12} Y^2>0$, we have to consider the sign of
\beq
 \tilde \Delta\equiv\frac {\Delta}{16\alpha_0}=-128 \alpha_0^2 \chi^3-32 \alpha_0\Omega^2 \chi^2-72\alpha_0 \Omega g^2\kappa^2 V^2 \chi-27\alpha_0 g^4 \kappa^4 V^4-2\Omega^4 \chi-g^2\kappa^2V^2 \Omega^3,
\eeq
where
\begin{align}
 \Omega=\omega_0^2-g^2 V^2\gamma^2.
\end{align}
Moreover,
\begin{align}
\frac M{\alpha_0}&\equiv \alpha_2= \Omega, \\
N&\equiv 4\alpha_0\alpha_4-\alpha_2^2=-8\alpha_0 \chi-\Omega^2.
\end{align}
Therefore, if $\chi\geq0$ and $\Omega>0$, $\tilde\Delta<0$ we have always two real roots and two complex conjugate roots. If $\chi\geq 0$ and $\omega<0$ then $\tilde\Delta$ may have any sign but $M$ and $N$ are both negative, so we always have
two or four real roots. This was also evident from the fact that $p(0)=-2\chi$ so, if $\chi>0$, we always have at least one real root (two if $\chi>0$ or if $\kappa\neq0$ when $\chi=0$. For $\chi=\kappa=0$, $x=0$ is a double root of $p$ and the discriminant 
vanishes).

When $\chi$ is negative things are little bit more complicate. In this case one has to study more carefully the sign of $\tilde \Delta$. When it is negative then we are done, while when it is positive then $\Omega$ must be negative, providing the same condition 
$v>\omega_0/g$ as for the case $\chi=\kappa=0$. In this case, we have also to impose the condition $N<0$, which gives $g^2 v^2>\omega_0^2+\sqrt {8\alpha_0 |\chi|}$, that is
\beq
 V^2> \frac {\omega_0^2+|Y|\sqrt {\frac 23 \lambda |\chi|}}{g^2+\omega_0^2+|Y|\sqrt {\frac 23 \lambda |\chi|}},
\eeq
which generalizes condition (\ref{condition}).

So, one is left with the study of the general conditions for which $\tilde \Delta$ have a specific sign when $\chi$ is negative. We will not pursue this here, but we limit ourselves to the following considerations. We can look at $\tilde \Delta>0$ as a second order 
inequality in $\alpha_0$, recalling the physical constraint $\alpha_0>0$. Since $\chi$ is negative, this is always true if the discriminant is negative, while if it is positive we are led to the condition $\alpha_0> \max (0, \alpha_+)$, $\alpha_+$ being
the higher root of the quadric. A complete classification of all possibilities is not difficult but quite cumbersome and out of the task of the present work.

\end{appendix}

%%%%%%%%%%%%%%%%%%%
%\newpage


\begin{thebibliography}{99}

\bibitem{philbin}
T.~G. Philbin, C.~Kuklewicz, S.~Robertson, S.~Hill, F.~K{\"o}nig, and
  U.~Leonhardt, ``Fiber-optical analog of the event horizon,'' {\em Science},
  vol.~319, no.~5868, pp.~1367--1370, 2008.

\bibitem{cacciatori} 
  S.~L.~Cacciatori, F.~Belgiorno, V.~Gorini, G.~Ortenzi, L.~Rizzi, V.~G.~Sala and D.~Faccio,
  ``Space-time geometries and light trapping in travelling refractive index perturbations,''
  New J.\ Phys.\  {\bf 12}, 095021 (2010).

\bibitem{belgiorno-prd} 
  F.~Belgiorno, S.~L.~Cacciatori, G.~Ortenzi, L.~Rizzi, V.~Gorini and D.~Faccio,
  ``Dielectric black holes induced by a refractive index perturbation and the Hawking effect,''
  Phys.\ Rev.\ D {\bf 83}, 024015 (2011)
  doi:10.1103/PhysRevD.83.024015
  [arXiv:1003.4150 [quant-ph]].

\bibitem{rubino2011}
E.~Rubino, F.~Belgiorno, S.~L. Cacciatori, M.~Clerici, V.~Gorini, G.~Ortenzi,
  L.~Rizzi, V.~G. Sala, M.~Kolesik, and D.~Faccio, ``{Experimental evidence of
  analogue Hawking radiation from ultrashort laser pulse filaments},'' {\em New
  J. Phys.}, vol.~13, p.~085005, 2011.

\bibitem{petev}
M.~Petev, N.~Westerberg, D.~Moss, E.~Rubino, C.~Rimoldi, S.~L. Cacciatori,
  F.~Belgiorno, and D.~Faccio, ``{Blackbody emission from light interacting
  with an effective moving dispersive medium},'' {\em Phys. Rev. Lett.},
  vol.~111, p.~043902, 2013.

\bibitem{finazzi2012}
S.~Finazzi and I.~Carusotto, ``{Quantum vacuum emission in a nonlinear optical
  medium illuminated by a strong laser pulse},'' {\em Phys. Rev.}, vol.~A87,
  no.~2, p.~023803, 2013.


\bibitem{finazzi2013}
S.~Finazzi and I.~Carusotto, ``{Spontaneous quantum emission from analog white
  holes in a nonlinear optical medium},'' {\em Phys. Rev.}, vol.~A89, no.~5,
  p.~053807, 2014.

\bibitem{hopfield-hawking} 
  F.~Belgiorno, S.~L.~Cacciatori and F.~Dalla Piazza,
  ``Hawking effect in dielectric media and the Hopfield model,''
  Phys.\ Rev.\ D {\bf 91}, no. 12, 124063 (2015)
  doi:10.1103/PhysRevD.91.124063
  [arXiv:1411.7870 [gr-qc]].
  %%CITATION = doi:10.1103/PhysRevD.91.124063;%%
  %7 citations counted in INSPIRE as of 16 Dec 2018

\bibitem{jacquet}
M.~Jacquet and F.~K\"onig, ``Quantum vacuum emission from a refractive-index
  front,'' {\em Phys. Rev. A}, vol.~92, p.~023851, Aug 2015.

\bibitem{linder} 
  M.~F.~Linder, R.~Schutzhold and W.~G.~Unruh,
  ``Derivation of Hawking radiation in dispersive dielectric media,''
  Phys.\ Rev.\ D {\bf 93}, no. 10, 104010 (2016)
  doi:10.1103/PhysRevD.93.104010
  [arXiv:1511.03900 [gr-qc]].

%\cite{Belgiorno:2017glw}
\bibitem{hopfield-kerr} 
  F.~Belgiorno, S.~L.~Cacciatori, F.~Dalla Piazza and M.~Doronzo,
  ``Hopfield-Kerr model and analogue black hole radiation in dielectrics,''
  Phys.\ Rev.\ D {\bf 96}, no. 9, 096024 (2017)
  doi:10.1103/PhysRevD.96.096024
  [arXiv:1707.01663 [hep-th]].
  %%CITATION = doi:10.1103/PhysRevD.96.096024;%%

\bibitem{haw-book} 
  F.~D.~Belgiorno, S.~L.~Cacciatori and D.~Faccio,
  {\emph{Hawking Radiation : From Astrophysical Black Holes to Analogous Systems in Lab}}; 
  World Scientific Publishing Company, Singapore (2018).


\bibitem{master}
F.~Belgiorno, S.~L.~Cacciatori and A.~Vigan\`o,
  ``Analog Hawking effect: A master equation,''
  Phys.\ Rev.\ D {\bf 102}, no. 10, 105003 (2020)
  doi:10.1103/PhysRevD.102.105003
  [arXiv:2003.04236 [gr-qc]].


\bibitem{faccio}
F.~Belgiorno, S.~L. Cacciatori, M.~Clerici, V.~Gorini, G.~Ortenzi, L.~Rizzi,
  E.~Rubino, V.~G. Sala, and D.~Faccio, ``{Hawking radiation from ultrashort
  laser pulse filaments},'' {\em Phys. Rev. Lett.}, vol.~105, p.~203901, 2010.

\bibitem{unshu} 
  R.~Schutzhold and W.~G.~Unruh,
  ``Comment on: Hawking Radiation from Ultrashort Laser Pulse Filaments,''
  Phys.\ Rev.\ Lett.\  {\bf 107}, 149401 (2011)
  doi:10.1103/PhysRevLett.107.149401
  [arXiv:1012.2686 [quant-ph]].
  %%CITATION = doi:10.1103/PhysRevLett.107.149401;%%
  %58 citations counted in INSPIRE as of 28 Nov 2020


\bibitem{rebut} 
  F.~Belgiorno {\it et al.},
  ``Reply to Comment on: Hawking radiation from ultrashort laser pulse filaments,''
  Phys.\ Rev.\ Lett.\  {\bf 107}, 149402 (2011)
  doi:10.1103/PhysRevLett.107.149402
  [arXiv:1012.5062 [quant-ph]].


\bibitem{drori}
J.~Drori, Y.~Rosenberg, D.~Bermudez, Y.~Silberberg and U.~Leonhardt,
  ``Observation of Stimulated Hawking Radiation in an Optical Analogue,''
  Phys.\ Rev.\ Lett.\  {\bf 122}, no. 1, 010404 (2019)
  doi:10.1103/PhysRevLett.122.010404
  [arXiv:1808.09244 [gr-qc]].
  %%CITATION = doi:10.1103/PhysRevLett.122.010404;%%
  %43 citations counted in INSPIRE as of 28 Nov 2020

\bibitem{phi-psi} 
  F.~Belgiorno, S.~L.~Cacciatori, F.~Dalla Piazza and M.~Doronzo,
  ``${\Phi -\Psi }$ model for electrodynamics in dielectric media: exact quantisation in the Heisenberg representation,''
  Eur.\ Phys.\ J.\ C {\bf 76}, no. 6, 308 (2016)
  doi:10.1140/epjc/s10052-016-4146-1
  [arXiv:1512.08738 [math-ph]].








\bibitem{hopfield-scripta}
F.~Belgiorno, S.~L. Cacciatori, and F.~Dalla~Piazza, ``{The Hopfield model
  revisited: Covariance and Quantization},'' {\em Phys. Scripta}, vol.~91,
  no.~1, p.~015001, 2016.


\bibitem{exact}
F.~Belgiorno, S.~L. Cacciatori, F.~Dalla~Piazza, and M.~Doronzo, ``{Exact
  quantisation of the relativistic Hopfield model},'' {\em Annals Phys.},
  vol.~374, pp.~338--365, 2016.

\bibitem{vigano} 
  F.~Belgiorno, S.~L.~Cacciatori and A.~Vigan\`o,
  ''Spectral boundary conditions and solitonic solutions in a classical Sellmeier dielectric,''
  Eur.\ Phys.\ J.\ C {\bf 77}, no. 6, 404 (2017)
  %doi:10.1140/epjc/s10052-017-4975-6
  [arXiv:1610.06308 [hep-th]].



\bibitem{drummond-book}
Drummond, P.D.; Hillery, M.
{\emph{The Quantum Theory of Nonlinear Optics}};
Cambridge University Press: {Cambridge,}  % Answer: Ok, also for the following ones. %Newly added information, please confirm. The followings are same.
 UK, 2014.


\bibitem{boyd}
Boyd, R. {\em Nonlinear Optics};
\newblock Academic Press: {Cambridge,} MA, USA, 2008.

\bibitem{ablowitz}
Ablowitz, M.J.
{\emph{Nonlinear Dispersive Waves: Asymptotic Analysis and Solitons}};
Cambridge Texts in Applied~Mathematics;
Cambridge University Press: {Cambridge,} UK, 2011; Volume 47.

\bibitem{universe} 
  F.~Belgiorno and S.~L.~Cacciatori,
  ``Analogous Hawking Effect in Dielectric Media and Solitonic Solutions,''
  Universe {\bf 6}, no. 8, 127 (2020).
  doi:10.3390/universe6080127
  %%CITATION = doi:10.3390/universe6080127;%%


\bibitem{ginzburg}
Ginzburg, V.L.
{\emph{Applications of Electrodynamics in Theoretical Physics and Astrophysics}};
Gordon and Breach Science Publishers: {New York,} USA, 1989.



  
 \bibitem{Dickson} 
 L.~E.~Dickson, {\emph{Elementary Theory of Equations}}; John Wiley and Sons; New York 1914.
   
\end{thebibliography}
\end{document}